\newcolumntype{P}[1]{>{\centering\arraybackslash}p{#1}}
\newtheorem{proposition}{Proposition}
\newtheorem{theorem}{Theorem}
\newtheorem{lemma}{Lemma}
\newtheorem{definition}{Definition}
\newtheorem{claim}{Claim}
\newcommand{\OPTESW}{\text{OPT-ESW}}
\newcommand{\ESW}{\text{ESW}}
\newcommand{\maxESW}{\max_{\mathcal{A}\in \mathcal{C}_\kappa(I)}\ESW}
\newcommand{\maxUSW}{\max_{\mathcal{A}\in \mathcal{C}_\kappa(I)}\USW}
\newcommand{\maxUSWa}{\max_{\mathcal{A}\in \mathcal{C}_\kappa(I')}\USW}
\newcommand{\maxUSWb}{\max_{\mathcal{A}\in \mathcal{C}_\kappa(I'')}\USW}
\newcommand{\OPTUSW}{\text{OPT-USW}}
\newcommand{\USW}{\text{USW}}
\newcommand{\cA}{\mathcal{A}}
\newcommand{\cB}{\mathcal{B}}
\DeclarePairedDelimiter\floor{\lfloor}{\rfloor}
\begin{document}
\title{The (Exact) Price of Cardinality for Indivisible Goods: A Parametric Perspective\thanks{The author names are ordered alphabetically.}}

\author{Alexander Lam\thanks{Department of Computer Science, City University of Hong Kong. alexlam@cityu.edu.hk} \and Bo Li\thanks{Department of Computing, The Hong Kong Polytechnic University. comp-bo.li@polyu.edu.hk} \and
Ankang Sun\thanks{Department of Computing, The Hong Kong Polytechnic University. ankang.sun@polyu.edu.hk} }

\date{}
\maketitle   

\begin{abstract}
We adopt a parametric approach to analyze the worst-case degradation in social welfare when the allocation of indivisible goods is constrained to be \emph{fair}.
Specifically, we are concerned with \emph{cardinality}-constrained allocations, which require that each agent has at most $k$ items in their allocated bundle.
We propose the notion of the \emph{price of cardinality}, which captures the worst-case multiplicative loss of \emph{utilitarian} or \emph{egalitarian} social welfare resulting from imposing the cardinality constraint. We then characterize tight or almost-tight bounds on the price of cardinality as exact functions of the instance parameters, demonstrating how the social welfare improves as $k$ is increased.
In particular, one of our main results refines and generalizes the existing asymptotic bound on the price of balancedness, as studied by Bei et al. \cite{DBLP:journals/mst/BeiLMS21}.
We also further extend our analysis to the problem where the items are partitioned into disjoint categories, and each category has its own cardinality constraint. Through a parametric study of the price of cardinality, we provide a framework which aids decision makers in choosing an ideal level of cardinality-based fairness, using their knowledge of the potential loss of utilitarian and egalitarian social welfare. 

\end{abstract}

\section{Introduction}

The allocation of indivisible goods to a set of agents is a ubiquitous problem in our society, capturing a number of real-world scenarios. 
For example, an inheritance may involve indivisible goods such as jewelry, cars, and estates, and food banks are constantly faced with the task of allocating donations to people in need.
In a corporate setting, equipment and human resources such as developers and designers need to be assigned to various projects and departments. These real-world scenarios often involve constraints which are imposed on the allocation, which may make the allocation difficult to compute, or prevent it from being socially optimal. 

The most commonly studied constraint in resource allocation is the requirement that the allocation should be \emph{fair} to the agents \cite{DBLP:journals/ai/AmanatidisABFLMVW23}. Fairness constraints can be expressed in several different ways: a \emph{proportional} allocation guarantees that each of the $n$ agents receives at least $\frac{1}{n}$th of their utility for the entire set of items, and in a \emph{balanced} allocation, the goods are spread out among the agents as evenly as possible, so that the number of items received by each agent differs by at most one. 
The latter notion of \emph{balancedness} is a natural constraint which may be imposed by the central decision maker due to its simplicity and ease of implementation, without requiring knowledge of the agents' utilities.
However, this constraint may severely degrade the social welfare of the allocation.
This is particularly true in instances where agents have low utility for a large number of items, rather than highly valuing a small subset of items, motivating the need for a weaker, more variable notion of `cardinal fairness'.

The main focus of our paper is on the constraint of \emph{cardinality}, a generalization of balancedness which imposes an upper limit of $k$ on the number of items an agent may receive from the allocation.\footnote{Note that the cardinality constraint is equivalent to balancedness when $m\in \{kn-k+1,kn-k+2,\dots,kn\}$, where $m$ is the total number of items.}
The cardinality constraint is commonly applied in practice. For example, when a university provides funding to support PhD students, it is typical to impose a limit on the number of students each professor can supervise, due to fairness concerns. 
Intuitively, the potential loss of welfare decreases as $k$ increases. The central decision maker may vary the parameter of $k$ to achieve their desired tradeoff between the level of balancedness and the social welfare of the cardinality-constrained allocation. More generally, the items may also be partitioned into disjoint categories, where each category $j$ has its own cardinality constraint of $k_j$. However, the exact effect of the value of $k$ on the social welfare objectives remains unclear, leading to the following research question which we address in this paper:

\begin{quote}
    \emph{What is the worst-case (multiplicative) loss of social welfare when there is a limit on the number of items each agent can receive in an allocation, and how does the loss change as the limit is varied?}
\end{quote}

In particular, we aim to quantify this loss in an \emph{exact} sense, as opposed to the asymptotic bounds which are common in the literature. This helps with making a more informed decision on the cardinality constraint values, particularly in scenarios where the number of agents and/or items is small.

\subsection{Our Contribution}
In this work, we initiate the study of \emph{price of cardinality} from the parametric perspective.
We define the price of cardinality as the worst-case ratio between the welfare of the optimal allocation, and the optimal welfare among all cardinality-constrained allocations.
Our work concerns both \emph{utilitarian} and \emph{egalitarian} social welfare, defined as the sum of agents' utilities and worst-off agent's utility respectively.
For both objectives, we establish tight or almost-tight bounds for the price of cardinality in both the single-category and multi-category cases, expressing the prices as exact functions of the cardinality parameters, as opposed to the common asymptotic bounds in the literature. 
A benefit of our parametrized approach is that it enables decision makers to choose their desired level of fairness based on the potential loss of social welfare. We summarize our main results as follows.

\paragraph{Single category.} 
We start with the single category case, where each agent can receive at most $k$ items. 
We show that for any instance with $n$ agents and $m$ items such that $k\ge \frac{m}{n}$, the utilitarian price of cardinality is $\frac{1}{2}\left(1+\sqrt{1+\frac{m-1}{k}}\right)$. This can be visualized in Figure~\ref{fig:util} as a function of the ratio $\frac{k}{m-1}$. 
We note that this bound is precisely tight for instances where $m$ and $k$ satisfy a divisibility constraint, and tight up to an additive constant that is smaller than 1 for every instance. Furthermore, when $m=kn$, this result coincides with (and refines) the asymptotic bound of $\Theta(\sqrt{n})$ for the \emph{price of balancedness} \cite{DBLP:journals/mst/BeiLMS21}.

For the objective of egalitarian social welfare, we present an exact bound of $\max\{ \frac{m-n+1}{k}, 1\}$ which is tight for all instances. This result shows that when $k$ is small compared to $m$, the cardinality constraint may adversely affect the egalitarian fairness of the allocation, particularly when the allocation is constrained to be balanced (i.e., $m=kn$).

\begin{figure}
    \centering
    \includegraphics[width=0.7\linewidth]{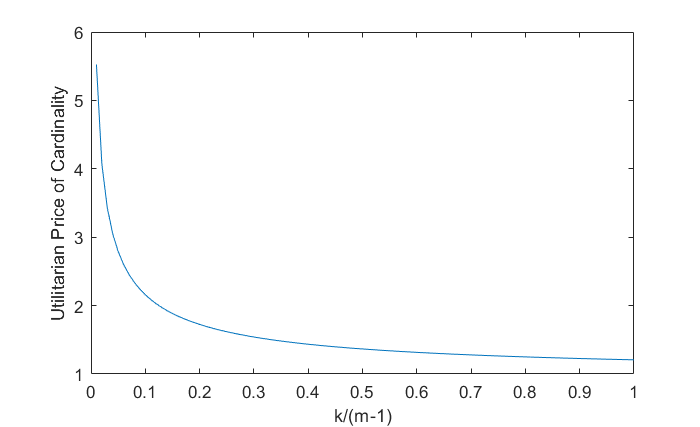}
    \caption{Plot of the utilitarian price of cardinality in the single-category setting as a function of $\frac{k}{m-1}$.}
    \label{fig:util}
\end{figure}

\paragraph{Multiple categories.}
For utilitarian social welfare, we first focus on the case of two agents, giving an exact and tight bound of $\frac{2m_1m_2}{m_2k_1+m_1k_2}$ for the utilitarian price of cardinality.\footnote{We order the categories such that $\frac{k_1}{m_1}\leq \frac{k_2}{m_2}\leq \dots \leq \frac{k_h}{m_h}$, where $k_j$ and $m_j$ denote the cardinality constraint and number of items in category $j$, respectively.} For the case of general $n$ agents, we establish a utilitarian price of cardinality of $\frac{m_1}{k_1}$, which is tight for instances where there are $n$ categories and $\frac{k_1}{m_1}=\dots=\frac{k_n}{m_n}$.

Finally, in the multi-category case, we establish an exact bound for egalitarian price of cardinality as a function of $n$, the cardinality constraints $k_j$, and the number of items in each category $m_j$, and this bound is tight for all instances.

For all of our main results, we also describe how a cardinal allocation with guaranteed welfare corresponding to the respective price of cardinality can be found.

\subsection{Related Work}

The problem of allocating a set of indivisible goods to self-interested agents has been extensively studied in computer science and economics in recent years, with a primary focus on finding allocations which are constrained to be `fair' in some axiomatic sense (see, e.g., the seminal papers by Gamow and Stern \cite{GS58}, Steinhaus \cite{steinhaus1948problem}, Varian \cite{Varian74} and the surveys by Amanatidis et al. \cite{DBLP:journals/ai/AmanatidisABFLMVW23}, Walsh \cite{Wals20}).
Common fairness notions have been shown to be compatible with cardinality constraints: for example, Biswas and Barman \cite{DBLP:conf/ijcai/BiswasB18} and Hummel and Hetland \cite{DBLP:conf/atal/HummelH22} considered how to compute EF1 and approximate MMS allocations under cardinality constraints.

While most related work is concerned with computing fair allocations, there is a growing body of research on quantifying the degradation of welfare when fairness constraints are imposed. This was first proposed by Caragiannis et al. \cite{DBLP:journals/mst/CaragiannisKKK12}, and results on the price of fairness have since been extended by Barman et al. \cite{DBLP:conf/wine/BarmanB020}, Li et al. \cite{DBLP:conf/www/0037L022}, and Li et al. \cite{DBLP:conf/atal/LiLLTT24}. These papers generally examine the effect of fairness constraints from an asymptotic perspective, e.g., the prices of EF1 and $\frac{1}{2}$-MMS are $\Theta(\sqrt{n})$, and since computing the exact bound can be challenging for every value of $n$, the literature often restricts to the case where $n=2$. In particular, the paper by Bei et al. \cite{DBLP:journals/mst/BeiLMS21} gives an asymptotic bound of $\Theta(\sqrt{n})$ (and a precise bound of $4/3$ for $n=2$) for the price of \emph{balancedness} for utilitarian social welfare. 
The price of fairness has also been studied in the allocations of indivisible chores \cite{DBLP:journals/aamas/SunCD23a,DBLP:journals/aamas/SunCD23,DBLP:conf/atal/Sun024}.
As asymptotic bounds may obscure the precise effects of constraints (particularly in scenarios with a small number of agents), this paper aims to generalize the existing asymptotic bound on the price of balancedness to the exact bound on the price of cardinality, and extend the results to multiple categories.

Beyond cardinality constraints, other types of constraints such as
connectivity
\cite{DBLP:conf/ijcai/BouveretCEIP17,DBLP:journals/corr/abs-2405-03467,DBLP:conf/ecai/SunL23},
geometric \cite{SNHA17}, and separation \cite{ESS22} have also been studied in the context of fair division.
In particular, the cardinality constraint in the single-category case is equivalent to budget-feasibility (as studied by Wu et al. \cite{DBLP:conf/ijcai/00010G21}) when agents have identical budgets and items have identical costs.
For a recent overview on constraints in fair division, we refer the readers to the survey by Suksompong \cite{DBLP:journals/sigecom/Suksompong21}. 
\section{Preliminaries}

For any $k\in \mathbb{N}^+$, denote $[k]\coloneqq \{1,\ldots,k\}$. We have a set $N$ of $n$ agents who are to receive a set $M=\{g_1,\ldots,g_m\}$ of $m$ indivisible goods. Each agent $i\in N$ has a utility function $u_i : 2^M\rightarrow \mathbb{R}_{\geq 0}$ over each possible subset of goods. Let $\mathcal{U} = (u_1, u_2, \dots, u_n)$ denote the agents' utility profiles.
Throughout the paper, we assume the utilities are \emph{additive} (i.e., $u_i(S)=\sum_{g\in S}u_i(\{g\})$) and \emph{normalized} (i.e., $u_i(\emptyset)=0$ and  $u_i(M)=1$ for each $i\in N$). We slightly abuse notation and write $u_i(g)$ instead of $u_i(\{g\})$.

An allocation $\mathcal{A}$ is a partition of the items into $n$ disjoint bundles $\mathcal{A}=(A_1,\dots,A_n)$, with agent $i$ receiving $A_i$. The set of goods is partitioned into $h$ different non-overlapping categories $C=\{C_1,\dots,C_h\}$, with respective cardinality constraints $k_1,\dots,k_h$. 
The cardinality constraint $k_j$ specifies the maximum number of items from category $C_j$ that each agent can have under a \emph{cardinal} allocation. In other words, an allocation $\mathcal{A}$ is \emph{cardinal} if for each bundle $A_i$ and category $C_j$, $|A_i \cap C_j|\leq k_j$ holds. 
Our prices of cardinality will be expressed in terms of the number of agents $n$, the cardinality constraints $k_j$, and the number of items in each category, so for simplicity, we denote, for each $j\in [h]$, $m_j\coloneqq |C_j|$ and $m = \sum_j m_j$. We also assume that for each $j\in [h]$, $k_j\geq \frac{m_j}{n}$ so that every item can be allocated.
In the single-category scenario (i.e., $h=1$), we slightly abuse notation and use $m$ and $k$ instead of $m_1$ and $k_1$ to refer to the number of items and the cardinality constraint, respectively.

We refer to a problem setting with agents $N$, goods $M$, utility profile $\mathcal{U}$ and partition of goods into categories $C$ as an \emph{instance}, denoted as $I=\langle N,M,\mathcal{U},C\rangle$.  

Our results are concerned with the following objectives of \emph{utilitarian social welfare} and \emph{egalitarian social welfare}.

\begin{definition}
Given an instance $I$ and an allocation $\mathcal{A}=(A_1,\dots,A_n)$ of the instance,
\begin{itemize}
    \item the \emph{utilitarian social welfare}, or the sum of the agents' utilities, is denoted by $\USW(\mathcal{A})\coloneqq \sum_{i\in N}u_i(A_i)$.
    \item  the \emph{egalitarian social welfare}, or the utility of the worst-off agent, is denoted by $\ESW(\mathcal{A})\coloneqq \min_{i\in N}u_i(A_i)$.
\end{itemize}
\end{definition}
We denote the optimal utilitarian (resp. egalitarian) social welfare over all possible allocations of an instance $I$ as $\OPTUSW(I)$ (resp. $\OPTESW(I)$).

Given an instance $I$ and cardinality constraints $\kappa=(k_1,\dots,k_h)$, let the set of all cardinal allocations be denoted as $\mathcal{C}_\kappa(I)$. We now define the main concept of our paper: the \emph{price of cardinality}, for our objective functions of utilitarian and egalitarian social welfare.

\begin{definition}\label{def:util:pcar}
    The \emph{utilitarian price of cardinality} is defined as
    $$\sup_{I=\langle N,M,\mathcal{U}, C\rangle, \kappa=(k_1,\dots,k_h)} \frac{\OPTUSW(I)}{\max_{\mathcal{A}\in \mathcal{C}_\kappa(I)}\USW(\mathcal{A})}.$$
\end{definition}

\begin{definition}
    The \emph{egalitarian price of cardinality} is defined as
    $$\sup_{I=\langle N,M,\mathcal{U}, C\rangle, \kappa=(k_1,\dots,k_h)} \frac{\OPTESW(I)}{\max_{\mathcal{A}\in \mathcal{C}_\kappa(I)}\ESW(\mathcal{A})}.$$
\end{definition}
For an instance $I$, if $\OPTESW(I)=0$, we define that $\frac{0}{0}=1$ and the egalitarian price of cardinality to be $1$.

\section{Single Category}\label{sec::single-cat}
We first give results for the case where there is only one category, and therefore only one cardinality constraint $k$. Note that if $m\leq k$, then the cardinality constraint will have no effect, and the price of cardinality will be equal to $1$. We therefore assume that $m>k$ throughout this section.

\subsection{Utilitarian Social Welfare}
We begin with the utilitarian price of connectivity in the single category setting. Interestingly, the price of cardinality for utilitarian welfare is only directly dependent on the number of items and the cardinality constraint, and not directly dependent on the number of agents (it is not entirely independent of $n$ as we require $n\geq \frac{m}{k}$).

\begin{theorem}\label{thm:pocusw1-ak}
    In the single-category case, 
    the utilitarian price of cardinality is $$\frac{1}{2}\left(1+\sqrt{1+\frac{m-1}{k}}\right).$$
\end{theorem}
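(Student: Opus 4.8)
The plan is to establish matching upper and lower bounds on the ratio $\OPTUSW(I)/\maxUSW(\cA)$ over all single-category instances with $m > k$ and $n \ge m/k$. Write $r \coloneqq \frac{1}{2}(1+\sqrt{1+\frac{m-1}{k}})$ for the target value; note $r$ is the positive root of $r^2 - r - \frac{m-1}{4k} = 0$, equivalently $k = \frac{m-1}{4r(r-1)}$, which is the algebraic identity I expect to drive both directions.

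For the \textbf{upper bound}, fix an instance $I$ and an optimal (unconstrained) allocation. The optimal utilitarian allocation assigns each good to an agent who values it most, so $\OPTUSW(I) = \sum_{g \in M} \max_{i} u_i(g)$. The idea is to construct a cardinal allocation $\cA$ that recovers a $1/r$ fraction of this. I would sort, for each agent, their goods in decreasing order of $u_i(g)$ and argue that one can always give each agent at most $k$ of the goods they are the top-valuer of (breaking ties and handling contested goods carefully), so that the lost welfare is controlled. The cleanest route: show that greedily assigning each good to its favorite, then when some agent exceeds $k$ goods, reassigning the surplus goods, loses at most a bounded amount; quantify the loss using the normalization $u_i(M) = 1$. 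The key inequality should be that if an agent keeps their $k$ most-valued (among contested) goods, the total value of goods they give up is at most $\frac{m-1}{?}$ times... — more precisely one balances the value kept per agent against the fact that the $k$ retained goods for an agent have average value at least (their discarded mass)$/(m-k)$ and at least $1/m$ in aggregate terms. The threshold $k = \frac{m-1}{4r(r-1)}$ is exactly what makes the worst-case trade-off tight, so the calculation must be arranged so this quadratic appears.

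For the \textbf{lower bound}, I would construct an explicit family of instances (matching the divisibility remark in the introduction, i.e.\ when $k \mid (m-1)$ or similar) where every cardinal allocation is forced to lose the stated factor. The natural candidate has a distinguished "popular" good valued highly by all agents and $m-1$ "common" goods; one agent values one good at essentially $1$ while another distribution of utilities forces the optimal allocation to concentrate items. A symmetric construction: $n = m/k$ agents (so balancedness is forced), one agent who would take $m$ goods of equal value $1/m$ in the optimum but is capped at $k$, versus the rest — tuned so $\OPTUSW \approx r \cdot \maxUSW$. I would compute $\OPTUSW$ and $\maxUSW$ for this family and verify the ratio equals $r$ exactly under the divisibility condition, and within an additive $o(1)$ otherwise.

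The \textbf{main obstacle} is the upper bound: transforming an arbitrary optimal allocation into a cardinal one while losing at most the factor $r$ requires a global rebalancing argument, not just a local swap, because many agents may simultaneously be over the cap and the contested goods form an arbitrary bipartite structure. I expect to need a careful accounting — perhaps an LP-duality or a fractional-relaxation-then-round argument, or an averaging over a cleverly chosen random cardinal allocation — to show the $k = \frac{m-1}{4r(r-1)}$ trade-off is the genuine bottleneck. Making the "additive constant smaller than $1$" claim precise will require tracking floor/ceiling effects in how the $m$ goods split into batches of size $k$.
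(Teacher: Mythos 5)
Your proposal is a plan rather than a proof, and both halves have genuine gaps. On the \textbf{lower bound}: the construction you sketch (one agent who would take all $m-1$ equally-valued goods in the optimum, capped at $k$, versus the rest) cannot reach the stated bound --- with a single over-capacity agent the ratio $\frac{\OPTUSW}{\maxUSW}$ is at most $\frac{2}{1+k/(m-1)}<2$, which is far below $\Theta(\sqrt{m/k})$. The missing idea is that the worst case has $s=-1+\sqrt{1+\frac{m-1}{k}}$ over-capacity agents, each valuing a disjoint block of $\frac{m-1}{s}$ items equally (so each retains only a $\frac{ks}{m-1}$ fraction under the cap), while one unit of welfare sits on a single shared item $g_m$ for the remaining agents. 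This gives $\OPTUSW=1+s$ versus $\maxUSW=1+\frac{ks^2}{m-1}$, and optimizing over $s$ is precisely what produces the quadratic you anticipate; the divisibility condition $m=k(c^2-1)+1$ is what makes $s$ and $\frac{m-1}{s}$ integral.

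On the \textbf{upper bound}, you correctly identify it as the main obstacle but do not supply the argument. The paper's route, which your sketch does not contain, is: let $S$ be the agents with more than $k$ items under the optimal allocation $\cA^*$ and let each such agent keep her $k$ most-valued items, retaining at least $\frac{k}{|A^*_i|}u_i(A^*_i)$; observe that the agents outside $S$ must each hold at least one item (so $\sum_{i\in S}|A^*_i|\le m-1$); apply AM--HM to get $\sum_{i\in S}\frac{1}{|A^*_i|}\ge\frac{|S|^2}{m-1}$; show by taking partial derivatives that the ratio is maximized when each $u_i(A^*_i)\in\{0,1\}$; and finally maximize the single-variable expression $\frac{1+x}{1+kx^2/(m-1)}$ at $x=s$. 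Even this only handles the case where the agents outside $S$ collectively hold at least one unit of welfare; the complementary case requires a preprocessing step (rescaling the utilities of under-capacity agents for the bundles of $S$) together with a greedy item-reassignment lemma guaranteeing a cardinal allocation of welfare at least $1+\sum_{i\in S}\frac{k}{|A^*_i|}\bigl(u_i(A^*_i)-u_{i^\dagger}(A^*_i)\bigr)$, which reduces it to the first case. Your proposed "keep the $k$ goods you are top-valuer of" step and the inequality you leave unfinished do not substitute for this accounting, and neither LP duality nor randomized rounding is needed once the AM--HM decomposition is in place.
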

\begin{proof}
    The lower bound will be proven in Lemma~\ref{lem:1catuswlower}, and the upper bound will be proven in Lemmas~\ref{lem:fix} and~\ref{lem:1catuppercase2}.
\end{proof}
It is worth noting that the stated price of cardinality expression is tight in a very strong sense:
\begin{itemize}
    \item For any instance with cardinality constraint $k$ and $m$ items, the utilitarian price of cardinality is \emph{at most} $\frac{1}{2}\left(1+\sqrt{1+\frac{m-1}{k}}\right)$.
    \item As we will shortly show in Lemma~\ref{lem:1catuswlower}, for any instance with cardinality constraint $k$ and $m=k(c^2-1)+1$ items, where $c\in \mathbb{N}^+\setminus\{1\}$, the utilitarian price of cardinality is exactly $\frac{1}{2}\left(1+\sqrt{1+\frac{m-1}{k}}\right)$.
    \item In Lemma~\ref{lem:1catuswlowergen}, we further show that for any other instance with cardinality constraint $k$, the utilitarian price of cardinality is at least $\frac{1}{2}\left(-1+\sqrt{1+\frac{m-1}{k}}\right)$.
\end{itemize}

This is due to the lower bound construction requiring $m$ and $k$ to meet a divisibility constraint. Specifically, there is a set of agents who each value the same item $g_m$ at one utility, and the remaining agents' utilities are such that they each receive the same number of items under the utilitarian-optimal allocation.
\begin{lemma}\label{lem:1catuswlower}
    In the single-category case, if $m=k(c^2-1)+1$ for some $c\in \mathbb{N}^+\setminus\{1\}$, then the utilitarian price of cardinality is at least $\frac{1}{2}\left(1+\sqrt{1+\frac{m-1}{k}}\right).$
\end{lemma}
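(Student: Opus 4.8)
The goal is to construct, for each $c \in \mathbb{N}^+ \setminus \{1\}$ and $m = k(c^2-1)+1$, an instance whose utilitarian price of cardinality is at least $\frac{1}{2}(1 + \sqrt{1 + \frac{m-1}{k}})$. Note that $\sqrt{1 + \frac{m-1}{k}} = \sqrt{1 + (c^2-1)} = c$, so the target ratio is exactly $\frac{c+1}{2}$. So really I need an instance where the unconstrained optimum USW is a factor of $\frac{c+1}{2}$ larger than the best cardinal USW.

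**The construction.**

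The plan is to use $n$ agents where $n$ is chosen large enough that $k \geq m/n$ (we may take $n = c^2 - 1$ or larger so that $kn = k(c^2-1) \geq m - 1$, hence $kn \geq m$). Partition the item set as $\{g_m\} \cup (M \setminus \{g_m\})$ with $|M \setminus \{g_m\}| = m - 1 = k(c^2-1)$. I would designate a group $S$ of $c$ ``grabber'' agents, each of whom values $g_m$ at utility $1$ and everything else at $0$; in the unconstrained optimum, only one of them can get $g_m$, so the other $c-1$ grabbers contribute $0$ regardless. The remaining $n - c$ agents each value only items in $M \setminus \{g_m\}$, spread uniformly: each such agent $i$ values a dedicated block of $k$ items at $1/k$ each (these blocks are pairwise disjoint across the ``ordinary'' agents, which is why we need $(n-c)k \leq m-1 = k(c^2-1)$, i.e. $n - c \leq c^2 - 1$; choosing $n = c^2 - 1 + c$ makes this an equality). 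Then in the unconstrained optimum each ordinary agent takes their whole block for utility $1$, and one grabber takes $g_m$ for utility $1$, giving $\OPTUSW = (n - c) + 1$.

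**Bounding the cardinal optimum.**

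In any cardinal allocation, $g_m$ goes to some agent; at most one grabber can profit from it, contributing at most $1$. Each ordinary agent $i$ can hold at most $k$ items from its block of $k$ items, so $u_i(A_i) \leq 1$ — but here's the key tension: the total welfare the ordinary agents can jointly extract is limited because the blocks have total size $k(c^2-1)$, while each ordinary agent wants all $k$ of its own items; there is no real conflict there. The conflict must instead come from making the grabbers' item $g_m$ and the block structure interact. Let me reconsider: the right construction likely makes the ordinary agents share items, i.e., the item blocks are NOT disjoint. The standard trick (mirroring Bei et al.) is: there are $c$ ``special'' items, and many agents each value one special item highly plus small pieces; under the cardinality constraint the special items must be spread thin. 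I would instead use the construction where $m - 1 = k(c^2 - 1)$ items are each valued at $\frac{1}{k(c^2-1)} \cdot$ something, with one agent valuing all of them — this forces that agent, who could get utility $\approx 1$ unconstrained, down to $\frac{k}{k(c^2-1)} = \frac{1}{c^2-1}$ under the constraint. Balancing the two sources of loss (the high-value item $g_m$ and the spread-out small items) is what produces the $\frac{c+1}{2}$ factor; I expect the clean version to set things up so $\OPTUSW = $ something like $c+1$ (or a scaled version) while every cardinal allocation has USW $\leq 2$.

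**Main obstacle.**

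The hard part is pinning down the exact utility values and the exact agent count so that (i) normalization $u_i(M) = 1$ holds for every agent, (ii) $k \geq m/n$ is satisfied, (iii) the unconstrained optimum genuinely achieves the large value, and most delicately (iv) \emph{every} cardinal allocation — not just the obvious symmetric one — has USW bounded by $\OPTUSW$ divided by $\frac{c+1}{2}$. Step (iv) requires a careful case analysis on how many items of each ``type'' each agent receives, using additivity and the per-agent cap of $k$ items; one typically argues that any agent contributing more than a threshold to the welfare must be ``wasting'' cardinality that another agent needed. I would structure this as: fix an arbitrary cardinal allocation $\mathcal{A}$, write $\USW(\mathcal{A})$ as a sum over item types weighted by how they are distributed, and show via a counting/averaging argument that this sum is at most $\frac{2}{c+1}\OPTUSW$. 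Verifying the divisibility condition $m = k(c^2-1)+1$ is exactly what makes the averaging tight with no rounding slack — which is precisely why the lemma is stated only for these $m$.
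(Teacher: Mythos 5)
There is a genuine gap: you never actually specify the instance, and neither of the constructions you gesture toward is the right one. You correctly compute the target ratio $\frac{c+1}{2}$ and correctly discard your first attempt (disjoint blocks of exactly $k$ items, where the cap never binds), but your fallback — a single agent valuing all $m-1$ small items uniformly, plus grabbers on $g_m$ — also fails: it gives $\OPTUSW = 2$ against a cardinal optimum of $1+\frac{1}{c^2-1}$, i.e.\ a ratio close to $2$, not $\frac{c+1}{2}$. The construction the paper uses sits between your two attempts: take $s = c-1$ ``block'' agents, each valuing a \emph{disjoint} block of $\frac{m-1}{s}=k(c+1)$ items at $\frac{1}{k(c+1)}$ each (so each block strictly exceeds the cap $k$), and let all remaining agents value only $g_m$ at $1$. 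Then $\OPTUSW = s+1 = c$, while the best cardinal allocation gives each block agent only $k$ of her $k(c+1)$ items, for welfare $1+\frac{ks^2}{m-1}=\frac{2c}{c+1}$, and the ratio is exactly $\frac{c+1}{2}$. The divisibility hypothesis $m = k(c^2-1)+1$ is precisely what makes $s=c-1$ and the block size $k(c+1)$ integral.

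Your concern that step (iv) — bounding \emph{every} cardinal allocation — requires a delicate counting or averaging argument is a symptom of not having found the right instance: in the construction above, each item is positively valued by at most one block agent (blocks are disjoint, and the grabbers value only $g_m$), so the cardinal optimum is immediate — each block agent keeps any $k$ items from her own block, one grabber takes $g_m$, and no reallocation can do better. Without a concrete instance and this verification, the proposal does not establish the lower bound.
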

\begin{proof}
    Let $I$ be an instance with cardinality constraint $k$ and $m=k(c^2-1)+1$ items for some $c\in \mathbb{N}^+\setminus\{1\}$, and let $s\coloneqq -1+\sqrt{1+\frac{m-1}{k}}$.
    Note that
    \begin{itemize}
        \item $s=c-1\in \mathbb{N}^+$,
        \item and $\frac{m-1}{s}=\frac{k(c^2-1)}{c-1}\in \mathbb{N}^+$.
    \end{itemize}
    
    The agents' utilities are as follows:
    \begin{itemize}
        \item for $i=1,\ldots,s$, if $j=(i-1)\frac{m-1}{s}+1,\ldots,i\frac{m-1}{s}$, then $u_i(g_j)=\frac{s}{m-1}$; otherwise, $u_i(g)=0$;
        \item for $i\geq s+1$, $u_i(g_m)=1$ and $u_i(g)=0$ for each $g\in M\setminus \{g_m\}$.
    \end{itemize}
    We have $\OPTUSW(I)=1+s$ as in the utilitarian-optimal allocation, each agent $i\in [s]$ receives utility $1$, and agents $s+1,\ldots,n$ have a total utility of $1$.
    We also have $\maxUSW(\mathcal{A})=1+\frac{ks^2}{m-1}$ by letting every agent $i\in [s]$ keep their $k$ most valued items; note $\frac{m-1}{s}>k$. 
    Dividing these terms and substituting $s=-1+\sqrt{1+\frac{m-1}{k}}$, we get our price of cardinality lower bound of
     $\frac{\OPTUSW(I)}{\maxUSW(\mathcal{A})}=\frac{1}{2}\left(1+\sqrt{1+\frac{m-1}{k}}\right)$.
\end{proof}

Note that if the divisibility constraint is not met by $m$ and $k$ (i.e., $m\neq k(c^2-1)+1$ for all $c\in \mathbb{N}^+\setminus\{1\}$), we can still construct a similar lower bound which will be slightly lower than our general upper bound (as exemplified in Figure~\ref{fig:gap}). We take $s=\floor{-1+\sqrt{1+\frac{m-1}{k}}}$, and let agents $1,\dots,s$ each equally value a disjoint subset of the items $g_1,\dots,g_{m-1}$, such that the number of goods they positively value differs by at most $1$. The gap between the upper and lower bound here is due to the rounding of $s$ and the partitioning of $m-1$ items among $s$ agents as evenly as possible. Furthermore, the following lemma implies that the price of cardinality lower bound corresponding to this instance construction differs from our stated upper bound by an additive constant of at most $1$.

\begin{lemma}\label{lem:1catuswlowergen}
    In the single-category case, if $m\neq k(c^2-1)+1$ for all $c\in \mathbb{N}^+\setminus\{1\}$, then the utilitarian price of cardinality is at least $\frac{1}{2}\left(-1+\sqrt{1+\frac{m-1}{k}}\right).$
\end{lemma}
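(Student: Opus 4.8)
The plan is to instantiate the construction sketched just before the lemma and to keep careful track of the two roundings involved. Write $s'\coloneqq -1+\sqrt{1+\frac{m-1}{k}}$ and $s\coloneqq\floor{s'}$, and record that squaring $s+1\le s'+1=\sqrt{1+\frac{m-1}{k}}$ gives $s(s+2)\le\frac{m-1}{k}$, while the definition of $s'$ rearranges to the identity $m-1=k\,s'(s'+2)\ge k\,s(s+2)$. First I would dispose of the trivial case: the price of cardinality is always at least $1$, so if $\frac{s'}{2}\le 1$ there is nothing to prove; hence we may assume $\frac{s'}{2}>1$, which forces $s\ge 2$ (and, since $n\ge m/k$, also $n\ge s+1$ in this regime, so the construction below makes sense). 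Now build the instance $I$: partition $g_1,\dots,g_{m-1}$ into $s$ blocks $S_1,\dots,S_s$ whose sizes differ by at most one (so $|S_i|\in\{\floor{\frac{m-1}{s}},\lceil\frac{m-1}{s}\rceil\}$); let each agent $i\in[s]$ value every good in $S_i$ at $\frac{1}{|S_i|}$ and all other goods at $0$, and let each agent $i\ge s+1$ value $g_m$ at $1$ and all other goods at $0$. This profile is additive and normalized. Two facts to record: (i) $\OPTUSW(I)=s+1$, achieved by handing each agent $i\in[s]$ the whole block $S_i$ and $g_m$ to any agent in $\{s+1,\dots,n\}$, which is optimal since each of $g_1,\dots,g_{m-1}$ is positively valued by a unique agent; and (ii) the cardinality constraint binds for every agent in $[s]$, since $\min_i|S_i|=\floor{\frac{m-1}{s}}\ge k(s+2)>k$, where $\floor{\frac{m-1}{s}}\ge k(s+2)$ because $\frac{m-1}{s}\ge k(s+2)\in\mathbb{N}$.

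Next I would upper-bound the best cardinal utilitarian welfare. In any cardinal allocation, each agent $i\in[s]$ holds at most $k$ goods, each worth at most $\frac{1}{\floor{(m-1)/s}}$ to her, and the agents $s+1,\dots,n$ together contribute at most $1$ (only the recipient of $g_m$ gets positive value among them); hence
$$\maxUSW(\mathcal{A})\le 1+\frac{ks}{\floor{(m-1)/s}}\le 1+\frac{ks^2}{m-s},$$
using $\floor{\frac{m-1}{s}}\ge\frac{m-1-(s-1)}{s}=\frac{m-s}{s}$. Combined with (i), the instance $I$ witnesses
$$\frac{\OPTUSW(I)}{\maxUSW(\mathcal{A})}\ge\frac{s+1}{1+\frac{ks^2}{m-s}}=\frac{(s+1)(m-s)}{(m-s)+ks^2}.$$

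It remains to verify $\frac{(s+1)(m-s)}{(m-s)+ks^2}\ge\frac{s'}{2}=\frac{1}{2}\left(-1+\sqrt{1+\frac{m-1}{k}}\right)$, which I expect to be the only genuine (but still routine) step. Cross-multiplying, it suffices to show $2(s+1)(m-s)\ge s'\big[(m-s)+ks^2\big]$; since $s'<s+1$, it is enough to show $2(m-s)\ge(m-s)+ks^2$, i.e.\ $m-s\ge ks^2$. This follows from the recorded inequality $m-1\ge k\,s(s+2)=ks^2+2ks$, whence $m-s\ge ks^2+(2k-1)s+1>ks^2$. The only real obstacle here is the bookkeeping around rounding $s'$ down to $s$ and $\frac{m-1}{s}$ to the block sizes $|S_i|$; the two facts $s\le s'<s+1$ and the exact relation $\frac{m-1}{k}=s'(s'+2)$ are precisely what let every floor be swapped for a clean polynomial estimate.
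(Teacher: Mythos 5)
Your proposal is correct and follows essentially the same route as the paper: the same block construction on $g_1,\dots,g_{m-1}$ with $s=\floor{-1+\sqrt{1+\frac{m-1}{k}}}$ agents plus a single shared item $g_m$, yielding $\OPTUSW=s+1$ and $\maxUSW\le 1+\frac{ks}{\floor{(m-1)/s}}$. The only difference is cosmetic: the paper verifies that the constructed ratio is within an additive $1$ of the upper bound via a longer chain of manipulations, whereas you bound the ratio below by $\frac{s'}{2}$ directly through the clean inequality $m-s\ge ks^2$ (and you handle the degenerate case $s'\le 2$ by triviality rather than via $t=\max\{\floor{s},1\}$); both verifications are sound.
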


\begin{figure}
    \centering
    \includegraphics[width=0.7\linewidth]{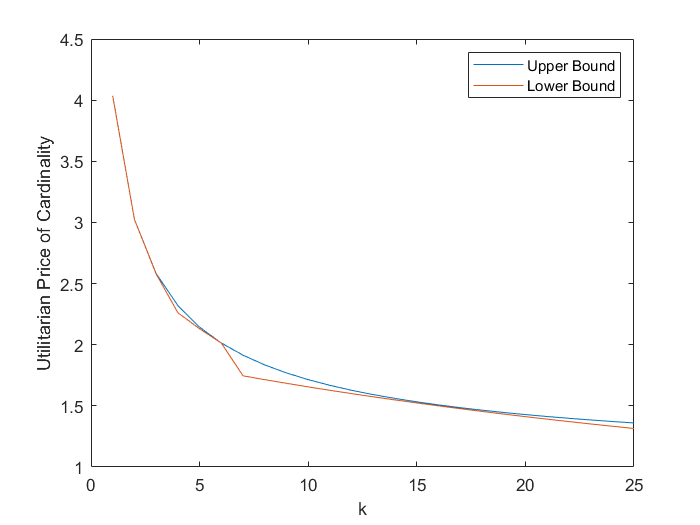}
    \caption{Plot for $m=50$ showing the gap between the lower bound as described in the main body and proof of Lemma~\ref{lem:1catuswlowergen} for any $m$ and $k$, and the upper bound from Theorem~\ref{thm:pocusw1-ak}.}
    \label{fig:gap}
\end{figure}

The missing proofs are deferred to the appendix. We now prove the upper bound of Theorem~\ref{thm:pocusw1-ak}, which holds for any $m$ and $k$. First, we fix an arbitrary instance $I$ with $m$ items and cardinality constraint $k$, and denote the utilitarian-optimal allocation\footnote{In the case of ties, the tiebreak procedure will be described in the upcoming proofs when necessary.} of $I$ by $\cA^*=(A^*_1,\ldots,A^*_n)$. We then define the following subsets of agents. Under $\cA^*$,
\begin{itemize}
    \item The set of agents receiving less than $k$ items is $R=\{ i\in[n] : |A^*_i| < k \}$.
    \item The set of agents receiving exactly $k$ items is $T=\{ i\in[n] : |A^*_i| = k \}$.
    \item The set of agents receiving more than $k$ items is $S=\{ i\in[n] : |A^*_i| > k \}$.
\end{itemize}
We can also assume that $\cA^*$ does not satisfy the cardinality constraint, and hence $S\neq \emptyset$, implying $R\neq \emptyset$.
We next divide the proof of the upper bound into two cases depending on whether $\sum_{i\in R\cup T}u_i(A^*_i)\geq 1$ or $\sum_{i\in R\cup T}u_i(A^*_i)< 1$. We begin with the former case.

\begin{lemma}\label{lem:fix}
    Let $I$ be a (possibly non-normalized) instance satisfying $0<u_i(M)\leq 1$ for each $i\in S$ and $\sum_{i\in N}\setminus Su_i(A^*_i)\geq 1$, where $S$ is the set of agents who receive more than $k$ items under utilitarian-optimal allocation $\cA^*=(A^*_1,\ldots,A^*_n)$. Then,
    $$\frac{\OPTUSW(I)}{\maxUSW(\mathcal{A})}\leq \frac{1+\sum_{i\in S}u_i(A^*_i)}{1+k\sum_{i\in S}\frac{u_i(A^*_i)}{|A^*_i|}}\leq \frac{1+s}{1+\frac{ks^2}{m-1}},$$ where $s=-1+\sqrt{1+\frac{m-1}{k}}$.
\end{lemma}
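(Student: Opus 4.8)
The plan is to establish the two inequalities in the statement separately.

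For the first inequality, I fix the utilitarian-optimal allocation $\cA^*$ and split $\OPTUSW(I)=\USW(\cA^*)=W+\sum_{i\in S}u_i(A^*_i)$, where $W\coloneqq\sum_{i\in N\setminus S}u_i(A^*_i)\ge 1$ by hypothesis. To lower-bound $\maxUSW(\cA)$ I exhibit one cardinal allocation: each agent in $N\setminus S$ keeps $A^*_i$ (legal, as $|A^*_i|\le k$), each agent $i\in S$ keeps a highest-valued $k$-subset of $A^*_i$, and the $\sum_{i\in S}(|A^*_i|-k)$ leftover items are packed into the unused capacity of the agents in $R$. This packing is feasible exactly because $\sum_{i\in R}(k-|A^*_i|)\ge\sum_{i\in S}(|A^*_i|-k)$, which rearranges to the standing assumption $m\le kn$ (agents in $T$ contribute $0$ to both sides). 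Since goods have nonnegative value and a highest-valued $k$-subset of $A^*_i$ has value at least $\tfrac{k}{|A^*_i|}u_i(A^*_i)$ (the average over the top $k$ items dominates the average over all $|A^*_i|$ items), this allocation is a valid cardinal allocation with welfare at least $W+k\sum_{i\in S}\tfrac{u_i(A^*_i)}{|A^*_i|}$. Hence the ratio is at most $\tfrac{W+a}{W+b}$ with $a\coloneqq\sum_{i\in S}u_i(A^*_i)$ and $b\coloneqq k\sum_{i\in S}\tfrac{u_i(A^*_i)}{|A^*_i|}$, and since $a\ge b\ge 0$ (because $|A^*_i|>k$ for $i\in S$) the map $W\mapsto\tfrac{W+a}{W+b}$ is nonincreasing on $[1,\infty)$, so plugging in $W=1$ yields the first inequality.

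For the second inequality, write $v_i\coloneqq u_i(A^*_i)$ and $n_i\coloneqq|A^*_i|$ for $i\in S$; then $v_i\le u_i(M)\le 1$ by hypothesis, and $\sum_{i\in S}n_i\le m-1$ because $W\ge 1>0$ forces at least one good to lie outside $\bigcup_{i\in S}A^*_i$. By Cauchy--Schwarz, $\bigl(\sum_{i\in S}\tfrac{v_i}{n_i}\bigr)\bigl(\sum_{i\in S}v_in_i\bigr)\ge\bigl(\sum_{i\in S}v_i\bigr)^2$; combined with $\sum_{i\in S}v_in_i\le\sum_{i\in S}n_i\le m-1$ and writing $V\coloneqq\sum_{i\in S}v_i$, this gives $\sum_{i\in S}\tfrac{v_i}{n_i}\ge\tfrac{V^2}{m-1}$, hence $\tfrac{1+\sum_{i\in S}v_i}{1+k\sum_{i\in S}v_i/n_i}\le\tfrac{1+V}{1+kV^2/(m-1)}\eqqcolon g(V)$. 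It then remains to show $g(V)\le g(s)$ for all $V>0$, where $s=-1+\sqrt{1+\tfrac{m-1}{k}}>0$ (positivity uses $m>k$). This is a one-variable calculus check: $g'$ vanishes exactly where $\tfrac{ks^2}{m-1}+\tfrac{2ks}{m-1}=1$, i.e. $(s+1)^2=1+\tfrac{m-1}{k}$, which is our $s$. Equivalently, $g(V)\le g(s)$ unwinds to the quadratic inequality $(s+2)\tfrac{k}{m-1}V^2-2V+s\ge 0$, whose discriminant equals $4-4\cdot\tfrac{ks}{m-1}(s+2)=4-4\bigl(\tfrac{ks^2}{m-1}+\tfrac{2ks}{m-1}\bigr)=0$ with positive leading coefficient; and one checks $g(s)=\tfrac{1+s}{1+ks^2/(m-1)}=\tfrac{s+2}{2}=\tfrac12\bigl(1+\sqrt{1+\tfrac{m-1}{k}}\bigr)$, which matches the stated bound and is consistent with Theorem~\ref{thm:pocusw1-ak}.

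I expect the main obstacle to be twofold: in the first part, carrying out the cardinality bookkeeping so that redistributing the leftover items among the agents in $R$ stays within the constraint (and recognizing that feasibility reduces cleanly to $m\le kn$); and in the second part, observing that Cauchy--Schwarz, fed the two structural constraints $v_i\le 1$ and $\sum_{i\in S}n_i\le m-1$, collapses the multivariate quantity $\sum_{i\in S}v_i/n_i$ into the single-variable lower bound $\tfrac{V^2}{m-1}$. Once that reduction is made, the remaining optimization over $V$ is routine.
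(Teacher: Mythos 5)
Your proof is correct, and your treatment of the first inequality is essentially the paper's: both arguments lower-bound the optimal cardinal welfare by letting each agent $i\in S$ keep a top-$k$ subset of $A^*_i$ (worth at least $\frac{k}{|A^*_i|}u_i(A^*_i)$), and then exploit $W\coloneqq\sum_{i\in N\setminus S}u_i(A^*_i)\ge 1$ via the monotonicity of $W\mapsto\frac{W+a}{W+b}$ when $a\ge b$; your explicit repacking of the leftover items into the slack capacity of $R$ (feasible exactly because $m\le kn$) is a slightly more careful version of the paper's appeal to a partial cardinal allocation. For the second inequality you take a genuinely different and cleaner route. The paper views $\frac{1+\sum_{i\in S}u_i(A^*_i)}{1+k\sum_{i\in S}u_i(A^*_i)/|A^*_i|}$ as a multivariate linear-fractional function, argues via partial derivatives that it is maximized with every $u_i(A^*_i)\in\{0,1\}$, and only then applies the AM--HM inequality to the surviving set $S'$ before the final one-variable optimization over $|S'|$. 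You instead apply weighted Cauchy--Schwarz, $\bigl(\sum_{i\in S}\tfrac{v_i}{n_i}\bigr)\bigl(\sum_{i\in S}v_in_i\bigr)\ge\bigl(\sum_{i\in S}v_i\bigr)^2$, and feed in the two structural facts $v_i\le 1$ and $\sum_{i\in S}n_i\le m-1$ to obtain $\sum_{i\in S}v_i/n_i\ge V^2/(m-1)$ in a single step; this collapses the paper's two-stage reduction and sidesteps its mildly awkward caveat that setting some $u_{i'}(A^*_{i'})=0$ formally contradicts $i'\in S$. The closing optimization of $\frac{1+V}{1+kV^2/(m-1)}$ (critical point at $V=s$, equivalently the zero-discriminant quadratic, with value $\frac{1}{2}\bigl(1+\sqrt{1+\frac{m-1}{k}}\bigr)$) and the justification of $\sum_{i\in S}|A^*_i|\le m-1$ from $\sum_{i\in N\setminus S}u_i(A^*_i)\ge 1$ coincide with the paper's.
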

\begin{proof}[Proof Sketch]
    We present a proof sketch for the inequality $\frac{1+\sum_{i\in S}u_i(A^*_i)}{1+k\sum_{i\in S}\frac{u_i(A^*_i)}{|A^*_i|}}\leq \frac{1+s}{1+\frac{ks^2}{m-1}}.$ By taking derivatives, we find that the LHS is maximized when every $u_i(A^*_i)$ is either $1$ or $0$. This gives us
    \begin{align*}
        \frac{1+\sum_{i\in S}u_i(A^*_i)}{1+k\sum_{i\in S}\frac{u_i(A^*_i)}{|A^*_i|}}&\leq \frac{1+ \sum_{i \in S'} 1}{1+k\sum_{i \in S'}\frac{1}{A^*_i}} \\
        &\leq \frac{1+|S'|}{1+\frac{k|S'|^2}{m-1}}\leq \frac{1+s}{1+\frac{ks^2}{m-1}},
    \end{align*}
    where $s=-1+\sqrt{1+\frac{m-1}{k}}$. Here, the second inequality follows from the arithmetic mean-harmonic mean (AM-HM) inequality, and the final inequality follows from taking derivatives with respect to $|S'|$.
\end{proof}

We now address the remaining case where $\sum_{i\in R\cup T} u_i(A^*_i) < 1$. 
In this case, we assume that instance $I$ is \emph{preprocessed} such that for each $j\in R$, $
\sum_{i\in S}u_{j}(A^*_i):= 1-\sum_{i\in R\cup T}u_i(A^*_i).
$
As $\cA^*$ is the utilitarian-optimal allocation, we have $\sum_{i\in R\cup T}u_{i}(A^*_i)\geq \sum_{i\in R\cup T}u_{j}(A^*_i)$ for all $j\in R$, and therefore before processing, we have $\sum_{i\in S}u_j(A^*_i) \geq 1-\sum_{i\in R\cup T} u_i(A^*_i)$.
Accordingly, this preprocessing can be achieved by reducing the utility that each agent $j\in R$ has for items $\{A^*_i\}_{i\in S}$ until we reach $\sum_{i\in S}u_{j}(A^*_i)= 1-\sum_{i\in R\cup T}u_i(A^*_i)$.
The new/preprocessed instance is not necessarily normalized, meaning there may exist $i\in R$ such that $0\leq u_i(M)\leq 1$. 
Note that the optimal utilitarian welfare does not change before and after the preprocessing, but the optimal utilitarian welfare among cardinal allocations weakly decreases after preprocessing, meaning that $\frac{\OPTUSW(I)}{\maxUSW(\mathcal{A})}$ weakly increases. We also remark that the preprocessing does not affect $S,R,$ or $T$.

Before proving the price of cardinality upper bound for the case where $\sum_{i\in R\cup T} u_i(A^*_i) < 1$, we find a lower bound on the utilitarian social welfare of a cardinal allocation, for any arbitrary instance $I$.\footnote{We remark that the lemma holds regardless of whether $\sum_{i\in R\cup T} u_i(A^*_i) < 1$ or $\sum_{i\in R\cup T} u_i(A^*_i) \geq 1$, and whether or not $I$ is preprocessed.}

\begin{lemma}\label{lem:conscard}
    For an arbitrary instance $I$,
    there exists a cardinal allocation $\cA$ such that
    $\USW(\cA)\geq 1+\sum_{i\in S}\frac{k}{|A^*_i|}(u_i(A^*_i)-u_{i^\dagger}(A^*_i))$
    for some agent $i^\dagger\in R$.
\end{lemma}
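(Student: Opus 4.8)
My plan is to produce a cardinal allocation meeting the bound by first building a \emph{fractional} assignment of items to agents that obeys the cardinality bound ``on average'', and then passing to an integral cardinal allocation of at least the same utilitarian welfare.

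Concretely, I would first designate the agent $i^\dagger$: let $i^\dagger\in R$ be an agent minimizing $\sum_{i\in S}\frac{|A_i^*|-k}{|A_i^*|}u_j(A_i^*)$ over $j\in R$ (the set $R$ is nonempty by the standing assumption that $\cA^*$ violates the cardinality constraint). I would then define a fractional assignment $x=(x_{g,i})$ as follows: each agent $i\in S$ keeps a $\frac{k}{|A_i^*|}$-fraction of every item in $A_i^*$ (which is exactly $k$ items' worth); each agent $j\in R\cup T$ keeps all of $A_j^*$; and the residual mass, namely a $\frac{|A_i^*|-k}{|A_i^*|}$-fraction of each item of each $A_i^*$ with $i\in S$, is shared out among the agents of $R$, with agent $j\in R$ taking the share $w_j:=\frac{k-|A_j^*|}{\sum_{j'\in R}(k-|A_{j'}^*|)}$ of it. The one nontrivial feasibility check is that each $j\in R$ receives at most $k$ items fractionally, i.e.\ $|A_j^*|+w_j\sum_{i\in S}(|A_i^*|-k)\le k$; this holds because $\sum_{i\in S}(|A_i^*|-k)\le \sum_{j'\in R}(k-|A_{j'}^*|)$, which is precisely the rearrangement of $m\le nk$ after noting that every agent in $T$ holds exactly $k$ items.

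Next I would lower-bound the value $V:=\sum_{i}\sum_{g}x_{g,i}u_i(g)$ of this assignment. It equals $\sum_{i\in S}\frac{k}{|A_i^*|}u_i(A_i^*)+\sum_{j\in R\cup T}u_j(A_j^*)+\sum_{j\in R}w_j\!\left(\sum_{i\in S}\frac{|A_i^*|-k}{|A_i^*|}u_j(A_i^*)\right)$. For the middle term I would use that $\cA^*$ is utilitarian-optimal, so every item sits with an agent valuing it maximally; hence $u_j(A_j^*)\ge u_{i^\dagger}(A_j^*)$ for every $j$, and by normalization $\sum_{j\in R\cup T}u_j(A_j^*)\ge \sum_{j\in R\cup T}u_{i^\dagger}(A_j^*)=1-\sum_{i\in S}u_{i^\dagger}(A_i^*)$. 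For the last term I would use that it is a convex combination over $j\in R$ (the weights $w_j$ are nonnegative and sum to $1$) of the quantities $\sum_{i\in S}\frac{|A_i^*|-k}{|A_i^*|}u_j(A_i^*)$, whose minimum over $j\in R$ is attained at $i^\dagger$ by construction, so the last term is at least $\sum_{i\in S}\frac{|A_i^*|-k}{|A_i^*|}u_{i^\dagger}(A_i^*)$. Summing and using the identity $-\sum_{i\in S}u_{i^\dagger}(A_i^*)+\sum_{i\in S}\frac{|A_i^*|-k}{|A_i^*|}u_{i^\dagger}(A_i^*)=-\sum_{i\in S}\frac{k}{|A_i^*|}u_{i^\dagger}(A_i^*)$ gives exactly $V\ge 1+\sum_{i\in S}\frac{k}{|A_i^*|}(u_i(A_i^*)-u_{i^\dagger}(A_i^*))$. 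Finally, the set of fractional assignments satisfying $\sum_i x_{g,i}=1$ for all $g$ and $\sum_g x_{g,i}\le k$ for all $i$ is a transportation polytope whose constraint matrix is the (totally unimodular) incidence matrix of a bipartite graph, so the maximum of the linear objective $\sum_{i}\sum_{g}x_{g,i}u_i(g)$ over it is attained at an integral vertex, i.e.\ at an actual cardinal allocation $\cA$ with $\USW(\cA)\ge V$.

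The step I expect to be the real obstacle is arriving at this construction in the first place: the natural attempt --- let $S$-agents keep their best $k$ items, let $R\cup T$-agents keep their bundles, and dump all the leftover items onto one agent of $R$ --- can miss the target bound, because that agent may not have enough remaining capacity to absorb the leftovers and their value must then leak to agents who value them less. Averaging the leftovers over $R$ proportionally to spare capacity repairs this, but it forces the two coupled choices above: $i^\dagger$ must be the argmin so the convex-combination bound is tight enough, and one must invoke integrality (or an explicit rounding) to recover an integral allocation at no loss of welfare. Getting the capacity arithmetic and the final cancellation exactly right is where the care is needed; everything else is bookkeeping.
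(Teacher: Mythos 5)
Your argument is correct, but it reaches the bound by a genuinely different route from the paper. The paper works combinatorially: it starts from $\cA^*$ and runs a greedy single-item reassignment procedure that repeatedly moves the item whose transfer from an over-capacity agent to an under-capacity agent loses the least welfare; the agent $i^\dagger$ is then taken to be the \emph{last} active agent used by the procedure, and the heart of the proof is an exchange argument (a claim showing that the $t$-th smallest loss actually incurred is at most the $t$-th smallest loss $l^{i^\dagger}_j(g_t)$ of the ``ideal'' reassignment to $i^\dagger$), which yields $\USW(\cA^*)-\USW(\cA^k)\leq\sum_{i\in S}\frac{|A^*_i|-k}{|A^*_i|}\bigl(u_i(A^*_i)-u_{i^\dagger}(A^*_i)\bigr)$. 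You instead build a fractional assignment that meets the capacities exactly, choose $i^\dagger$ as the argmin so that the convex-combination term over $R$ is bounded below by its value at $i^\dagger$, and then invoke integrality of the bipartite transportation polytope to round at no loss. Your version is shorter, sidesteps the delicate exchange argument entirely, and sits naturally alongside the paper's Proposition~\ref{prop:computil} (the integral optimum you appeal to is exactly the max-weight $b$-matching computed there); the paper's version buys an explicit combinatorial procedure, which the authors later flag as a possible tool for the multi-category case. Two small points to tidy: your middle-term bound $\sum_{j\in R\cup T}u_j(A^*_j)\geq 1-\sum_{i\in S}u_{i^\dagger}(A^*_i)$ uses the normalization $u_{i^\dagger}(M)=1$, whereas the lemma is also applied to the \emph{preprocessed} (possibly non-normalized) instance; there the same quantity equals $1-\sum_{i\in S}u_{i^\dagger}(A^*_i)$ by the preprocessing identity, so the bound survives, but you should say which justification you are using. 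Also note that your $i^\dagger$ need not coincide with the paper's, which is harmless since downstream uses only require \emph{some} $i^\dagger\in R$.
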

\begin{proof}[Proof Sketch]
    In the full proof, we show that there exists a cardinal allocation $\cA$ such that
    $\USW(\cA)\geq \sum_{i\in R\cup T}u_i(A^*_i)+\sum_{i\in S}\frac{k}{|A^*_i|}u_i(A^*_i)+\sum_{i\in S}\frac{|A^*_i|-k}{|A^*_i|}u_{i^\dagger}(A^*_i)$,
    which suffices because we have $\sum_{i\in R\cup T}u_i(A^*_i)=1-\sum_{i\in S}u_{i^\dagger}(A^*_i).$ (Recall that $I$ is preprocessed). Specifically, this cardinal allocation can be achieved by the following greedy procedure.
   
    The procedure starts from $\cA^*$, and at each step, reassigns the item with the least utility loss from some \emph{unsatisfied} agent's bundle to some \emph{active} agent;
    an agent is \emph{unsatisfied} if she receives more than $k$ items, and is \emph{active} if she receives less than $k$ items.

\begin{itemize}
    \item Step 1: Set $\cB \leftarrow \cA^*$ as the initial allocation, $P\leftarrow S$ as the initial set of unsatisfied agents, and $Q\leftarrow R$ as the initial set of active agents;
    \item Step 2: If there are no unsatisfied agents, then terminate and output the underlying allocation $\cB$ (this will be $\cA^k$). Otherwise, find the item $e^*\in \bigcup_{i\in P} B_i$ 
    and an active agent $i^*\in Q$ such that reassigning $e^*$ to agent $i^*$ causes the minimum utilitarian social welfare loss among all single-item reassignments from items of unsatisfied agents to active agents. Reassign $e^*$ to agent $i^*$, and update $\cB$ accordingly.
    \item Step 3: Update $P$ and $Q$, and return to Step 2.
\end{itemize}

As $m\leq kn$, the procedure can terminate and the returned allocation $\cA^k$ is cardinal.
Moreover, during the reassignment process, an active agent can never become unsatisfied and any unsatisfied agent can never become active.
\end{proof}

We now prove the upper bound for Theorem \ref{thm:pocusw1-ak} for the case where $\sum_{i\in R\cup T}u_i(A^*_i)< 1$.

\begin{lemma}\label{lem:1catuppercase2}
    If $\sum_{i\in R\cup T}u_i(A^*_i)< 1$, then $$\frac{\OPTUSW(I)}{\maxUSW(\mathcal{A})}\leq \frac{1+s}{1+\frac{ks^2}{m-1}},$$ 
    where $s=-1+\sqrt{1+\frac{m-1}{k}}$.
\end{lemma}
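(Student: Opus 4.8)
The plan is to follow the template of Lemma~\ref{lem:fix}: substitute the cardinal allocation produced by Lemma~\ref{lem:conscard} into the ratio and reduce to a one-variable optimization. Let $i^\dagger\in R$ be the agent supplied by Lemma~\ref{lem:conscard}, and for $i\in S$ write $x_i:=u_i(A^*_i)-u_{i^\dagger}(A^*_i)$ and $n_i:=|A^*_i|$. Since $I$ is preprocessed, $\sum_{i\in R\cup T}u_i(A^*_i)=1-\sum_{i\in S}u_{i^\dagger}(A^*_i)$, so the numerator collapses to $\OPTUSW(I)=\USW(\cA^*)=\sum_{i\in R\cup T}u_i(A^*_i)+\sum_{i\in S}u_i(A^*_i)=1+\sum_{i\in S}x_i$ (the preprocessing preserves both the optimal value and the optimality of $\cA^*$). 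On the denominator side, Lemma~\ref{lem:conscard} furnishes a cardinal allocation $\cA$ with $\USW(\cA)\ge 1+k\sum_{i\in S}x_i/n_i$. Hence $\frac{\OPTUSW(I)}{\maxUSW(\mathcal{A})}\le\frac{1+\sum_{i\in S}x_i}{1+k\sum_{i\in S}x_i/n_i}$. I would also record the elementary facts $x_i\ge 0$ (every item of $A^*_i$ is valued by $i$ at least as much as by any other agent, as $\cA^*$ is the unconstrained utilitarian optimum), $x_i\le u_i(A^*_i)\le 1$ (agents in $S$ are untouched by the preprocessing and hence still normalized), and $n_i\ge k+1$.

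Next I would pass to a single variable via the Cauchy--Schwarz inequality $\sum_{i\in S}x_i/n_i\ge\frac{(\sum_{i\in S}x_i)^2}{\sum_{i\in S}x_in_i}$, combined with the estimate $\sum_{i\in S}x_in_i\le m-1$, which splits into two cases. If some agent in $R\cup T$ holds at least one item, then $\sum_{i\in S}n_i\le m-1$, and $x_i\le 1$ gives $\sum_{i\in S}x_in_i\le\sum_{i\in S}n_i\le m-1$. Otherwise every agent in $R\cup T$ is empty-handed, so $\sum_{i\in R\cup T}u_i(A^*_i)=0$; the preprocessing identity then forces $\sum_{i\in S}u_{i^\dagger}(A^*_i)=1$, and since $\sum_{i\in S}n_i=m$ in this case, $\sum_{i\in S}x_in_i=\sum_{i\in S}u_i(A^*_i)n_i-\sum_{i\in S}u_{i^\dagger}(A^*_i)n_i\le m-\sum_{i\in S}u_{i^\dagger}(A^*_i)=m-1$, using $u_i(A^*_i)\le 1$ and $n_i\ge 1$. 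Writing $t:=\sum_{i\in S}x_i\ge 0$ and $\phi(t):=\frac{1+t}{1+kt^2/(m-1)}$, we obtain $\frac{\OPTUSW(I)}{\maxUSW(\mathcal{A})}\le\phi(t)$. Maximizing $\phi$ over $t\ge 0$ is routine: $\phi'(t)$ has the sign of $1-\frac{k(t^2+2t)}{m-1}$, so $\phi$ increases on $[0,s]$ and decreases afterwards, where $s=-1+\sqrt{1+\frac{m-1}{k}}$ is the positive root of $t^2+2t=\frac{m-1}{k}$; therefore $\phi(t)\le\phi(s)=\frac{1+s}{1+ks^2/(m-1)}$, which is the desired bound.

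I expect the only delicate point to be the estimate $\sum_{i\in S}x_in_i\le m-1$, specifically its second case. In Lemma~\ref{lem:fix} the analogous ``$m-1$'' is free, because the hypothesis $\sum_{i\in N\setminus S}u_i(A^*_i)\ge 1>0$ forces some agent outside $S$ to hold an item; here $\sum_{i\in R\cup T}u_i(A^*_i)$ may be $0$ and every agent outside $S$ may be empty-handed, so the ``$-1$'' has to be recovered from the preprocessing bookkeeping (the identity $\sum_{i\in S}u_{i^\dagger}(A^*_i)=1$) rather than from an item sitting elsewhere. Everything else --- invoking Lemma~\ref{lem:conscard}, Cauchy--Schwarz, and the single-variable calculus --- mirrors the proof sketch of Lemma~\ref{lem:fix} and is routine.
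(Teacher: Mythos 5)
Your proof is correct, but it takes a genuinely different and noticeably shorter route than the paper's. The paper first installs a tiebreaking rule for $\cA^*$ and splits on whether $\sum_{i\in S}|A^*_i|<m$; in the first case it builds an auxiliary instance $I'$ (all of $R\cup T$ valuing a single item $g^\dagger$, agents in $S$ valuing $A^*_i$ at $u_i-u_{i^\dagger}$) so as to re-enter Lemma~\ref{lem:fix}, and in the second case it runs a multivariate extreme-point optimization over the $v_i(A^*_i)$ subject to $\sum_i v_i(A^*_i)\leq |S|-1$, with three further subcases (equal bundle sizes, unequal sizes via the rearrangement and AM--HM inequalities, and $E\neq\emptyset$). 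You bypass all of this: after the same reduction to $\frac{1+\sum_i x_i}{1+k\sum_i x_i/n_i}$ via Lemma~\ref{lem:conscard} and the preprocessing identity, the weighted Cauchy--Schwarz bound $\sum_i x_i/n_i\geq (\sum_i x_i)^2/\sum_i x_in_i$ together with the uniform estimate $\sum_i x_in_i\leq m-1$ collapses everything to the one-variable function $\phi(t)$. Your two-case proof of $\sum_i x_in_i\leq m-1$ is the genuinely new ingredient and is sound: when $R\cup T$ holds no items, the normalization of $u_{i^\dagger}$ recovers the $-1$ exactly where the paper instead has to fight through its Case 2 subcases. What your argument gives up is the explicit description of near-extremal instances that the paper's extreme-point analysis exposes (and which the paper leans on for the matching lower bounds), and you should add one line for the degenerate situation $\sum_i x_in_i=0$ (all $x_i=0$), where Cauchy--Schwarz is vacuous but the ratio is trivially at most $1\leq\phi(s)$. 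Otherwise the steps — $x_i\in[0,1]$, $n_i\geq 1$, and the calculus showing $\phi$ peaks at $t=s$ — all check out.
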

\begin{proof}[Partial Proof]
    We first describe the tiebreak procedure for the utilitarian-optimal allocation $\cA^*$. If multiple agents are tied for having the highest utility for an item, we pick the allocation $\cA^*$ based on the following criteria;
    \begin{itemize}
        \item if it is possible to allocate each item to the agent that values it most, such that all $m$ items are owned by agents with strictly more than $k$ items, then $\cA^*$ is defined as this allocation,
        \item otherwise, the tie is broken in favour of the agent with less than $k$ items.
    \end{itemize}
    
    We divide the remainder of the proof into two cases, depending on whether or not all of the goods are allocated to agents in $S$ under $\cA^*$.

    \textbf{Case 1:} $\sum_{i\in S}|A^*_i|<m$. We present the full proof for the case where not all items are allocated to agents in $S$ under $\cA^*$. 
    Recall that by Lemma~\ref{lem:conscard}, there exists an agent $i^\dagger\in R$ 
    and a cardinal allocation $\cA^k$ such that
    $
    \USW(\cA^k)\geq 1+\sum_{i\in S}\frac{k}{|A^*_i|}(u_i(A^*_i)-u_{i^\dagger}(A^*_i)).
    $

    Consider the agent $i^\dagger \in R$ and a specific item $g^\dagger \in \bigcup_{i\in R\cup T} A^*_i$; the existence of $g^\dagger$ is guaranteed due to $\sum_{i\in S}|A^*_i| < m$.
    We construct another (possibly non-normalized) instance $I'$ which differs from $I$ only by the agents' utilities. Below, we describe the utility function $u'$ that each agent has in $I'$,
    \begin{itemize}
        \item for $i\in R\cup T$, $u'_i(g^\dagger) = 1$ and $u'_i(g)=0$ for all $g\neq g^\dagger$;
        \item for $i\in S$, $u'_i(g) = u_i(g)-u_{i^\dagger}(g)$ if $g\in A^*_i$ and $u'(g)=0$ otherwise.
    \end{itemize}
    Denote by $\cA'$ the utilitarian-optimal allocation of $I'$ and by $S'$ the set of agents receiving more than $k$ items in $\cA'$; note that $S'=S$.
    Due to $\sum_{i\in S}|A^*_i| < m$, when picking $\cA^*$, if multiple agents are tied for having the highest utility for an item, then the tie is broken in favour of the agent with less than $k$ items.
    As a consequence, for any $i\in S$ and $g\in A^*_i$, $u'_i(g) > 0 $ due to $i^\dagger \in R$.

        Now we show that in $I'$, $0<u'_i(A'_i)\leq 1$ for each $i\in S'$. From the construction of $u'_i()$, we immediately have $u'_i(A'_i) \leq 1$.
        To prove $0<u'_i(A'_i)$, since agent $i$ is the only one with positive utility on the items in bundle $A^*_i$, we have $A^*_i \subsetneq A'_i$ and hence $0< u'_i(A'_i)$; note that $u'_i(g)>0$ for every $g\in A^*_i$.

    We now present the upper bound of the ratio regarding $I$ for Case 1 as follows,
    $$
        \begin{aligned}
            \frac{\OPTUSW(I)}{\maxUSW(\mathcal{A})}&\leq \frac{\sum_{i\in R\cup T}u_i(A^*_i)+\sum_{i\in S}u_i(A^*_i)}{\USW(\cA^k)}\\
        &=\frac{1-\sum_{i\in S}u_{i^\dagger}(A^*_i)+\sum_{i\in S}u_i(A^*_i)}{1+\sum_{i\in S}\frac{k}{|A^*_i|}(u_i(A^*_i)-u_{i^\dagger}(A^*_i))}\\
        &=\frac{1+\sum_{i\in S}u'_i(A'_i)}{1+\sum_{i\in S}\frac{k}{|A_i|}u'_i(A'_i)}\\
        &=\frac{\OPTUSW(I')}{\maxUSWa(\mathcal{A})}\\
        & \leq \frac{1}{2}\left( \sqrt{1+\frac{m-1}{k}}+1\right),
        \end{aligned}
    $$
    where the first equality results from the property of the preprocessed instance and the fact that $i^\dagger \in R$;
    the last inequality transition follows from Lemma \ref{lem:fix}.
    \end{proof}

Finally, we conclude the section with the following result on computing utilitarian-optimal cardinal allocations.
\begin{proposition}\label{prop:computil}
    Given a single-category instance $I$ and cardinality constraint $k$, the utilitarian-optimal cardinal allocation can be found in polynomial time, and has a utilitarian social welfare of at least $\frac{2}{1+\sqrt{1+\frac{m-1}{k}}}\cdot\OPTUSW(I)$. 
\end{proposition}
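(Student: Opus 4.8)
The plan is to extract an algorithm directly from the constructive ingredients already developed for the upper bound of Theorem~\ref{thm:pocusw1-ak}. First I would compute a utilitarian-optimal (unconstrained) allocation $\cA^*$: since utilities are additive, this is just the assignment of each good to an agent who values it most, with ties broken by the rule stated in the proof of Lemma~\ref{lem:1catuppercase2}; this takes $O(mn)$ time. If $\cA^*$ is already cardinal we are done with ratio $1$. Otherwise, the sets $R,T,S$ are determined in linear time, and I would run the greedy reassignment procedure described in the proof sketch of Lemma~\ref{lem:conscard}: starting from $\cB\leftarrow\cA^*$, repeatedly move the single item whose reassignment from an unsatisfied agent ($|B_i|>k$) to an active agent ($|B_i|<k$) incurs the least utilitarian loss, updating the unsatisfied/active sets, until no unsatisfied agent remains. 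Each reassignment reduces $\sum_i \max\{|B_i|-k,0\}$ by one, so there are at most $m$ iterations, and each iteration scans $O(mn)$ candidate (item, active-agent) pairs, giving an $O(m^2 n)$ running time overall — polynomial.

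The correctness of the cardinality of the output allocation and the fact that active agents never become unsatisfied (and vice versa) are exactly the properties asserted in Lemma~\ref{lem:conscard}, so the returned allocation $\cA^k$ is cardinal. For the welfare guarantee, I would combine two facts. By Lemma~\ref{lem:conscard} (applied to the preprocessed instance, as in the proof of Lemma~\ref{lem:1catuppercase2}), $\USW(\cA^k)\ge 1+\sum_{i\in S}\frac{k}{|A^*_i|}(u_i(A^*_i)-u_{i^\dagger}(A^*_i))$ for some $i^\dagger\in R$, and then the chain of inequalities in Lemmas~\ref{lem:fix} and~\ref{lem:1catuppercase2} shows $\frac{\OPTUSW(I)}{\USW(\cA^k)}\le \frac{1}{2}\big(1+\sqrt{1+\frac{m-1}{k}}\big)$. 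Rearranging gives $\USW(\cA^k)\ge \frac{2}{1+\sqrt{1+\frac{m-1}{k}}}\cdot\OPTUSW(I)$, which is the claimed bound. Since the price-of-cardinality upper bound is the optimal cardinal welfare divided into $\OPTUSW(I)$, and $\cA^k$ achieves at least this much, the utilitarian-optimal cardinal allocation (which I would additionally note can itself be computed in polynomial time, e.g.\ by a min-cost flow / assignment formulation with per-agent capacity $k$) is also within this factor.

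One subtlety I would be careful about: the preprocessing step in the proof of Lemma~\ref{lem:1catuppercase2} modifies utilities, but it only \emph{weakens} the cardinal optimum, so a cardinal allocation with welfare at least $\frac{2}{1+\sqrt{1+(m-1)/k}}\,\OPTUSW$ for the preprocessed instance has at least that welfare for the original instance as well (the optimal unconstrained welfare is unchanged, and the greedy procedure's guarantee is stated for arbitrary, possibly non-normalized instances). A second point is that Lemma~\ref{lem:fix} was stated for the case $\sum_{i\in N\setminus S}u_i(A^*_i)\ge 1$ while Lemma~\ref{lem:1catuppercase2} handles the complementary case; since the greedy allocation $\cA^k$ and its lower bound from Lemma~\ref{lem:conscard} are valid in both regimes, the guarantee holds uniformly. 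The main obstacle is not any deep argument but rather bookkeeping: making sure the algorithm is run on (a description of) the preprocessed instance, or equivalently arguing that running the greedy procedure on the original instance only does better, and confirming the per-iteration cost so the polynomial-time claim is airtight.
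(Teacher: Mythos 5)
Your proposal is correct, but you have inverted the emphasis relative to the paper: the parenthetical remark at the end of your second paragraph (compute the utilitarian-optimal cardinal allocation via an assignment formulation with per-agent capacity $k$) \emph{is} the paper's entire proof. The paper builds a complete bipartite graph with $k$ copies of each agent on one side and the $m$ goods (padded with zero-valued dummies) on the other, computes a maximum-weight matching with the Hungarian algorithm, and then observes that the welfare guarantee ``follows immediately from Theorem~\ref{thm:pocusw1-ak}'': the optimal cardinal allocation attains $\max_{\mathcal{A}\in\mathcal{C}_\kappa(I)}\USW(\mathcal{A})$ by definition, and the theorem bounds $\OPTUSW(I)$ by $\frac{1}{2}\bigl(1+\sqrt{1+\frac{m-1}{k}}\bigr)$ times that quantity. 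Your main route---running the greedy reassignment procedure of Lemma~\ref{lem:conscard} and re-tracing the inequality chains of Lemmas~\ref{lem:fix} and~\ref{lem:1catuppercase2}---is essentially a re-proof of the upper bound of Theorem~\ref{thm:pocusw1-ak}, which is both redundant (the theorem can simply be cited) and somewhat delicate to turn into an algorithm: the specific tie-breaking rule for $\cA^*$ and the preprocessing of utilities are devices of the \emph{analysis}, not steps you need to execute, and the greedy procedure in any case only yields \emph{some} cardinal allocation meeting the bound, not the utilitarian-optimal one that the proposition asks for. Since the optimal cardinal allocation weakly dominates any cardinal allocation, the existential form of Theorem~\ref{thm:pocusw1-ak} already gives the welfare guarantee, and the only new content needed is the polynomial-time computation of the optimum---exactly the matching argument you mention in passing.
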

\begin{proof}
  Consider a complete bipartite graph $G=(U,V,E)$, where $U$ represents $k$ copies of each agent, and $V$ represents the $m$ goods, with zero-valued dummy items added such that $|U|=|V|$. Also, an edge between agent $i\in U$ and item $g\in V$ has weight equal to $u_i(g)$. Our desired allocation can be found by computing a maximum weight bipartite matching, such as by using the Hungarian algorithm \cite{https://doi.org/10.1002/nav.3800020109}. The utilitarian social welfare guarantee follows immediately from Theorem~\ref{thm:pocusw1-ak}.
\end{proof}
\subsection{Egalitarian Social Welfare}
We now move to the objective of egalitarian social welfare, where the worst-case degradation of worst-case fairness objective is quantified by our exact and tight bounds on the egalitarian price of cardinality. Note that in addition to the assumption that $m>k$, we also assume in this subsection that $m\geq n$, because if $m<n$, then $\OPTESW(I)=0$ and consequently, the egalitarian price of cardinality will be $1$.
\begin{theorem}\label{thm::ega-poc}
    In the single-category case, the egalitarian price of cardinality is $\max\left\{\frac{m-n+1}{k},1\right\}$. 
\end{theorem}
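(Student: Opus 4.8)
The plan is to prove a matching upper and lower bound on the egalitarian price of cardinality, $\max\{\frac{m-n+1}{k},1\}$. I would first dispose of the trivial regime: if $m-n+1 \le k$, then a balanced allocation already gives every agent at least $\lfloor m/n \rfloor \ge \cdots$ items... more carefully, I want to show that whenever $\frac{m-n+1}{k} \le 1$ the cardinality constraint never hurts the egalitarian objective, so the price is exactly $1$. The key observation is that the optimal egalitarian allocation can always be assumed to satisfy the cardinality constraint in this regime: given any allocation, if some agent $i$ holds more than $k$ items, at least one other agent holds at most $\lfloor (m - (k+1))/(n-1) \rfloor$ items, and since $m - n + 1 \le k$ forces $m \le k + (n-1)$, the surplus can be shuffled without lowering anyone's utility below the worst-off value — here I would use a careful counting argument (or an exchange/averaging argument on the bundle with the most items) rather than grinding through the arithmetic.

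For the main regime $\frac{m-n+1}{k} > 1$, the upper bound is the heart of the matter. Fix an instance $I$ and let $v = \OPTESW(I)$; I need to exhibit a cardinal allocation $\cA$ with $\ESW(\cA) \ge \frac{k}{m-n+1}\, v$. I would take an egalitarian-optimal allocation $\cA^*$ and repair it: each agent $i$ with $|A_i^*| > k$ must shed items. The idea is that agent $i$'s bundle $A_i^*$ has utility $u_i(A_i^*) \ge v$ spread over $|A_i^*|$ items; by keeping the $k$ most-valued items, agent $i$ retains at least a $\frac{k}{|A_i^*|}$ fraction of $u_i(A_i^*)$... but this is too weak if $u_i(A_i^*) = v$ exactly and the value is spread evenly. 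The right bound comes from noticing that the \emph{total} number of items is $m$, so if $s$ agents are over the limit, the items they must discard number at most $m - k \cdot 1 - (n-1)\cdot 0$... I would instead argue as follows: since $u_i(M) = 1$ and agent $i$ values $A_i^*$ at least $v$, the items \emph{outside} $A_i^*$ are worth at most $1 - v$ to her; combined with the fact that across all agents the bundles partition $M$, an averaging argument shows some agent can keep $k$ items worth at least $\frac{k}{m-n+1}v$. Concretely: the worst-off retained value is minimized when the $v$-worth of an over-full bundle is spread as evenly as possible over as many items as possible, and the cardinality constraint on the \emph{other} $n-1$ agents caps how many items one agent can be forced to hold at $m - (n-1)$, giving the factor $\frac{k}{m-(n-1)} = \frac{k}{m-n+1}$.

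For the lower bound I would construct an explicit tight instance: put $n-1$ agents who each value a single common "good" item $g^*$... actually, to force the ratio $\frac{m-n+1}{k}$, I want an instance where $\OPTESW = 1/(m-n+1) \cdot$ (something) — the natural construction has all $n$ agents with identical utilities: each of the $m$ items is worth $\frac{1}{m}$ to everyone, but that gives $\OPTESW$ tied to balancedness. The cleaner construction: let agent $1$ value $m - n + 1$ specific items equally at $\frac{1}{m-n+1}$ each (and nothing else), while agents $2,\dots,n$ each uniquely value one of the remaining $n-1$ items at $1$; then $\OPTESW(I) = \frac{1}{m-n+1}$ (give agent $1$ all her valued items, each other agent their item), wait — that gives agent $1$ utility $1$. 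I need agent $1$'s optimal bundle to \emph{be} the tight one, so instead make \emph{all} agents symmetric over a shared pool: the standard trick is $m-n+1$ items each worth $\frac{1}{m-n+1}$ to every agent, plus $n-1$ "private" items, arranged so that in any cardinal allocation some agent gets at most $k$ of the shared items hence utility $\le \frac{k}{m-n+1}$, while unconstrained every agent can get utility $\frac{1}{m-n+1}$... I expect the main obstacle to be pinning down this lower-bound instance so that the ratio is \emph{exactly} $\frac{m-n+1}{k}$ for all admissible $m,n,k$ (not just asymptotically), and verifying that no cardinal allocation beats $\frac{k}{m-n+1}\OPTESW$ there — this requires a short case analysis on how the shared and private items can be distributed. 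The upper-bound repair argument and this lower-bound verification are the two pieces I would write in full; the trivial regime is routine.
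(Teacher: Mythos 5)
Your overall strategy matches the paper's: repair the unconstrained egalitarian optimum by letting each over-full agent keep her $k$ most-valued items, and exhibit a tight instance with $n-1$ ``private-item'' agents plus one agent spreading value over $m-n+1$ items. But two concrete confusions keep this from closing. First, in the upper bound you dismiss the ``keep the $k$ most-valued items'' argument as too weak and then attribute the bound $|A_i^*|\leq m-n+1$ to ``the cardinality constraint on the other $n-1$ agents.'' That is the wrong reason: the correct one is that in the \emph{unconstrained} egalitarian-optimal allocation $\cA^*$, every agent must hold at least one item (otherwise $\OPTESW(I)=0$ and the price is $1$ by convention), so no bundle exceeds $m-(n-1)$ items. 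With that observation the argument you rejected is exactly right: agent $i$ retains at least $\frac{k}{|A_i^*|}u_i(A_i^*)\geq \frac{k}{m-n+1}\OPTESW(I)$, and the discarded items can be absorbed by other agents without hurting anyone (goods have non-negative value, and $m\leq kn$ guarantees a feasible cardinal reassignment). The same observation disposes of your ``trivial regime'' instantly: if $m-n+1\leq k$ then $\cA^*$ is already cardinal.

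Second, the lower bound. The instance you wrote down and then rejected with ``wait --- that gives agent $1$ utility $1$'' is precisely the tight construction: agents $2,\dots,n$ each value one unique item at $1$, agent $1$ values the remaining $m-n+1$ items at $\frac{1}{m-n+1}$ each. Agent $1$ getting utility $1$ unconstrained is not a defect --- it gives $\OPTESW(I)=1$ --- and under any cardinal allocation agent $1$ holds at most $k$ items, hence utility at most $\frac{k}{m-n+1}$, yielding the ratio $\frac{m-n+1}{k}$ immediately with no case analysis. The symmetric ``shared pool'' construction you pivot to is unnecessary and, as you acknowledge, unverified; you were chasing a small $\OPTESW$ when all that matters is the ratio. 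Reinstating your first construction and fixing the justification of $|A_i^*|\leq m-n+1$ turns this into the paper's proof.
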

\begin{proof}
    We begin by proving the upper bound and consider instance $I$. If $\OPTESW(I)=0$, the statement holds trivially, so we assume $\OPTESW(I)> 0$ for the remainder of the proof. 
    Denote by $\cA^*=(A^*_1,\ldots,A^*_k)$ the unconstrained egalitarian-optimal allocation. If $|A^*_i| \leq k$ holds for each $i\in N$, then the egalitarian-optimal cardinal allocation is the same as $\cA^*$, and thus the theorem statement holds for this case. We now focus on the remaining case where $\max_{i\in N} |A_i| > k$.

    Let $S$ be the set of agents who each receive more than $k$ items in $\cA^*$.
    We now consider a cardinal allocation $\cA'$ where each $i\in S$ receives their $k$ most valued items from $A^*_i$ and for each agent $j\notin S$, $A^*_j\subseteq A'_j$; note that $\cA'$ can be cardinal due to $m\leq kn$.
    Then we have the following,
    $$
    \begin{aligned}
    \ESW(\cA') &= \min \left\{ \min_{i\in S} u_i(A'_i), \min_{i\notin S} u_i(A'_i)  \right\}\\
    &\geq \min \left\{ \min_{i\in S} k \cdot \frac{u_i(A^*_i)}{|A^*_i|}, \OPTESW(I) \right\} \\
    & \geq k\cdot \frac{\OPTESW(I)}{m-n+1},
    \end{aligned}
    $$
    where the last inequality transition is due to $|A^*_j| \geq 1$ for all $j\in N$ (and hence $|A^*_i| \leq m-n+1$ for all $i$).
    As a consequence, we have 
    $$
    \frac{\OPTESW(I)}{\maxESW(\mathcal{A})} \leq \frac{\OPTESW(I)}{\ESW(\cA')}  \leq \frac{m-n+1}{k},
    $$
    completing the proof of the upper bound.
    
    For the lower bound, consider instance $I$ where the agents' utility functions are as follows;
    \begin{itemize}
        \item for $i\in[n-1]$, $u_i(g_i) = 1$ and $u_i(g_j)=0$ for all $j\neq i$;
        \item for $i=n$, $u_n(g_j)=0$ for each $j\in [n-1]$ and $u_n(g_j) = \frac{1}{m-n+1}$ for each $j \geq n$.
    \end{itemize}
    By allocating each $g_i$ to $i$ for each $i\in [n-1]$ and allocating all of the remaining items to agent $n$, we see that $\OPTESW(I)=1$. On the other hand, $\maxESW(\mathcal{A})=\frac{k}{m-n+1}$ as agent $k$ can receive at most $k$ items, deriving the desired lower bound.
\end{proof}

Note that this bound is tight for all feasible values of $m$, $n$, and $k$. This result shows that when $m$ is large compared to $n$ and $k$, there may be a significant reduction in egalitarian fairness when cardinality constraints are naively imposed in pursuit of a fair allocation. We also remark that although computing an egalitarian-optimal (possibly non-cardinal) allocation is well-known to be NP-hard \cite{DBLP:conf/coco/Karp72}, if we are provided with such an allocation, we can find, in linear time, a cardinal allocation with an egalitarian social welfare guarantee corresponding to the egalitarian price of cardinality. This is simply achieved by letting each agent keep their $k$ most valued items from the starting egalitarian-optimal allocation.

\section{Multiple Categories}\label{sec::multi-cate}
We now extend our analysis to the setting where the items are partitioned into multiple categories. Recall that there are $h$ categories, where category $j\in [h]$ has $m_j$ items to be allocated and a cardinality constraint of $k_j$. 
We also ensure that all items can be assigned, $\frac{m_j}{k_j} \leq n$ holds for each $j\in [h]$.
Without loss of generality, we order categories such that $\frac{k_1}{m_1}\leq \frac{k_2}{m_2}\leq \dots \leq \frac{k_h}{m_h}$ and break ties in favour of the category with a smaller number of items (i.e., if $\frac{k_i}{m_1}=\frac{k_j}{m_j}$ and $m_i<m_j$, then set $i<j$).

\subsection{Utilitarian Social Welfare}

For utilitarian social welfare, we first consider the case of two agents. Before stating the main result, we establish a key reduction which restricts the space of instances to those with weakly higher $\frac{\OPTUSW(I)}{\maxUSW(\mathcal{A})}$ ratio.

\begin{lemma}\label{lem:red}
    Given an instance $I$ with two agents and cardinality constraints $\kappa$, there exists another instance $I'$ which only differs from $I$ in the utility functions, where:
    \begin{itemize}
        \item under the utilitarian-optimal allocation $\cA'^*$,
        \begin{itemize}
            \item both agents exceed the cardinality constraint in exactly one category each,
            \item neither agent receives any utility from any category where they do not exceed the cardinality constraint,
        \end{itemize}
        \item and $\frac{\OPTUSW(I)}{\maxUSW(\mathcal{A})}\leq \frac{\OPTUSW(I')}{\maxUSWa(\mathcal{A})}$ holds.
    \end{itemize}
\end{lemma}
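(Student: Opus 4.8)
The plan is to turn $I$ into $I'$ by a short sequence of utility modifications, each provably not decreasing the ratio $\frac{\OPTUSW(I)}{\maxUSW(\mathcal{A})}$, ending at an instance whose utilitarian-optimal allocation has the claimed form. The key starting observation is that, since the categories are disjoint and each carries its own cardinality bound, an unconstrained utilitarian-optimal allocation and a utilitarian-optimal cardinal allocation can both be chosen to decompose across categories: writing $\alpha_j$ (resp.\ $\beta_j$) for the optimal (resp.\ optimal cardinal) utilitarian welfare obtainable from the items of $C_j$ alone, we have $\OPTUSW(I)=\sum_j\alpha_j$ and $\maxUSW(\mathcal{A})=\sum_j\beta_j$, and the restriction of a fixed utilitarian-optimal allocation $\cA^*$ to each $C_j$ is optimal for $C_j$. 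Since $n=2$ forces $m_j\le 2k_j$, at most one agent can exceed $k_j$ in $C_j$ under $\cA^*$; call $C_j$ \emph{binding} if exactly one does, and say that agent \emph{owns} it. Two elementary facts drive the argument: \textbf{(i)} zeroing all utilities on a category weakly increases the ratio whenever that category's own ratio $\alpha_j/\beta_j$ is at most $\frac{\OPTUSW(I)}{\maxUSW(\mathcal{A})}$, which — since the latter is at least $1$ — always holds when $\alpha_j=\beta_j$; and \textbf{(ii)} a cardinal allocation of $C_j$ can always salvage at least a $\frac{k_j}{m_j}$ fraction of whatever an agent values inside $C_j$, so $\beta_j\ge\frac{k_j}{m_j}\alpha_j$.

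The modifications are as follows. \emph{(Step 1)} Delete every non-binding category: if $\cA^*$ already respects $k_j$ on $C_j$, then $\cA^*$ restricted to $C_j$ is cardinal with welfare $\alpha_j$, so $\beta_j=\alpha_j$ and fact (i) applies. \emph{(Step 2)} In each surviving binding category $C_j$, owned by agent $i$, set the other agent's utility to $0$ on all of $C_j$ and reassign the whole of $C_j$ to agent $i$ (who still owns it, as $m_j>k_j$); the only welfare the other agent drew from $C_j$ under $\cA^*$ came from the fewer than $k_j$ items it held there, and this step is shown ratio-monotone, most transparently by relocating that lost welfare into a category that agent already owns, so that numerator and denominator fall in lockstep. \emph{(Step 3)} For each agent owning more than one binding category, keep the owned category of smallest $\frac{k_j}{m_j}$ and spread all of that agent's utility uniformly over its items; by fact (ii), concentrating an agent's value onto its lowest-ratio owned category only decreases $\sum_j\beta_j$ while leaving $\OPTUSW(I)$ unchanged, hence only raises the ratio. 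After Steps 1--3 each agent owns exactly one binding category, these are distinct (every category has a unique owner), and each agent has positive utility only in its own category — which are exactly the properties asserted for the utilitarian-optimal allocation of $I'$.

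The crux is the ratio-monotonicity of Steps 2 and 3: unlike Step 1 these do not subtract equal quantities from numerator and denominator, so one must track how $\alpha_j$ and $\beta_j$ evolve and apply the ``mediant'' inequality — a weighted mediant of real numbers lies between their smallest and largest — with the correct per-category comparison; in Step 3 this rests on fact (ii) together with the extremality of the uniform profile. A second delicate point is the case in which one agent, say agent $2$, owns no binding category at all, so that only agent $1$'s bundle violates the constraint under $\cA^*$; Steps 1--2 then leave agent $2$ with no utility, and one must additionally \emph{create} a binding category for it by assigning it a uniform utility over some category with $m_j>k_j$, once more checking via fact (i) and the mediant bound that the ratio does not fall. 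This presupposes the presence of a second category with $m_j>k_j$, which is precisely the regime in which the lemma is applied; the complementary case behaves like a single-category instance and is handled separately. Finally, the degenerate situation $\OPTUSW(I)=\maxUSW(\mathcal{A})$ (ratio equal to $1$) is dispatched by simply exhibiting any instance of the required form; as in Lemma~\ref{lem:fix}, the resulting $I'$ may be non-normalized, which the framework allows.
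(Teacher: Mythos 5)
Your overall strategy is the same as the paper's: a sequence of utility surgeries, each shown to weakly increase the ratio, terminating at the canonical two-owned-categories form. The paper implements this as three reductions (concentrate an agent's excess categories into one; zero her utility on non-exceeded categories; manufacture a violation for an agent who exceeds nothing), whereas you organize everything around the per-category decomposition $\OPTUSW(I)=\sum_j\alpha_j$, $\maxUSW(\mathcal{A})=\sum_j\beta_j$. That decomposition is a genuinely nice reframing -- it turns the paper's delicate allocation-tracking (Claim~2 and the appeal to the greedy reassignment of Lemma~\ref{lem:conscard}) into local comparisons of $\Delta\alpha_j$ versus $\Delta\beta_j$, and your Step~3 (spread the owner's value uniformly over her lowest-$\frac{k_j}{m_j}$ category) is cleaner than the paper's Lemma~7 precisely because you perform it after ownership has been cleaned up, so $|A^*_{ij}|=m_j$ and fact~(ii) is tight.

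However, you have deferred exactly the steps where the work lies, and the justification you do sketch for Step~2 is not the right one. ``Relocating the lost welfare so that numerator and denominator fall in lockstep'' is neither what happens (the numerator is unchanged while the denominator drops) nor available when the other agent owns no binding category. The correct argument for Step~2 needs no relocation: writing $\mathrm{top}_k(X)$ for agent $1$'s $k$ highest-valued items of $X$, one has $\beta_j'=u_1(\mathrm{top}_{k_j}(C_j))\le u_1(\mathrm{top}_{k_j}(A^*_{1j}))+u_1(A^*_{2j})$ and $\beta_j\ge u_1(\mathrm{top}_{k_j}(A^*_{1j}))+u_2(A^*_{2j})$, hence $\beta_j-\beta_j'\ge u_2(A^*_{2j})-u_1(A^*_{2j})=\alpha_j-\alpha_j'$, and subtracting $(\Delta\alpha,\Delta\beta)$ with $\Delta\beta\ge\Delta\alpha\ge 0$ from a ratio that is at least $1$ weakly increases it. This needs to be written out; without it Step~2 is an assertion. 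The more serious gap is your treatment of an agent who exceeds no constraint: ``assigning it a uniform utility over some category with $m_j>k_j$'' injects utility from nowhere (breaking normalization and potentially raising the numerator) and, if the per-item value is below the owner's, changes nothing about the optimal allocation, so the agent still violates no constraint. The paper's Lemma~9 avoids both problems by \emph{moving} an $\epsilon$ of agent $2$'s existing utility onto the items of $C_2$ that agent $1$ holds but values at zero, so that agent $2$ wins all $m_2>k_2$ items of $C_2$ in the new optimum while the numerator is unchanged up to $\epsilon$; you need this (or an equivalent) rather than the step as you describe it. Finally, your Steps~1--2 destroy normalization where the paper's merging constructions preserve it; this is tolerable only because the downstream use of the lemma explicitly admits non-normalized intermediate instances, and you should say so rather than relying on it implicitly.
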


Following this reduction, we are now ready to present the utilitarian price of cardinality for two agents, which is exact and tight for all $\kappa=(k_1,\dots,k_h)$ and $m_1,\dots,m_h$.
\begin{theorem}\label{thm:multi-n=2}
    For two agents and $h\geq 2$,  
    the utilitarian price of cardinality is $\frac{2}{\frac{k_1}{m_1} + \frac{k_2}{m_2}}$.
\end{theorem}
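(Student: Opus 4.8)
The plan is to prove the lower bound by an explicit construction and the upper bound by invoking the reduction of Lemma~\ref{lem:red}.

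\emph{Lower bound.} Take the instance $I$ in which agent $1$ values every item of $C_1$ at $\tfrac1{m_1}$ and everything else at $0$, agent $2$ values every item of $C_2$ at $\tfrac1{m_2}$ and everything else at $0$, and the items of $C_3,\dots,C_h$ (if any) are irrelevant — they can be split among the two agents in a cardinality-feasible way since $m_j\le nk_j=2k_j$. Allocating all of $C_1$ to agent $1$ and all of $C_2$ to agent $2$ gives $\OPTUSW(I)=2$. In any cardinal allocation agent $i\in\{1,2\}$ holds at most $k_i$ items of the only category they value, hence earns at most $\tfrac{k_i}{m_i}$, so $\max_{\cA\in\mathcal C_\kappa(I)}\USW(\cA)=\tfrac{k_1}{m_1}+\tfrac{k_2}{m_2}$. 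This yields the ratio $\tfrac{2}{k_1/m_1+k_2/m_2}$; recall $C_1,C_2$ are the two smallest-ratio categories, which is what makes this configuration extremal.

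\emph{Upper bound.} Fix an arbitrary instance $I$ and apply Lemma~\ref{lem:red} to obtain $I'$ with $\tfrac{\OPTUSW(I)}{\max_{\cA\in\mathcal C_\kappa(I)}\USW(\cA)}\le\tfrac{\OPTUSW(I')}{\max_{\cA\in\mathcal C_\kappa(I')}\USW(\cA)}$, where under the utilitarian-optimal allocation $\cA'^*=(A_1',A_2')$ each agent exceeds their cardinality constraint in exactly one category and gets no utility elsewhere. Let $C_a$ be agent $1$'s overloaded category and $C_b$ agent $2$'s. The first point is that $a\ne b$: if both agents overloaded the same $C_j$ then $|A_1'\cap C_j|,|A_2'\cap C_j|\ge k_j+1$, so $C_j$ would contain at least $2k_j+2>2k_j\ge m_j$ items, a contradiction. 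Writing $\alpha:=u_1(A_1')$, $\beta:=u_2(A_2')$, I construct a cardinal allocation $\cB$: agent $1$ keeps its $k_a$ most-valued items of $C_a$ — earning at least $\tfrac{k_a}{|A_1'\cap C_a|}\alpha\ge\tfrac{k_a}{m_a}\alpha$ by averaging, since all of $\alpha$ lies in $C_a$ — agent $2$ keeps its $k_b$ most-valued items of $C_b$ analogously, and all displaced items go to the other agent. Then $\cB$ is cardinal: agent $2$ ends with $m_a-k_a\le k_a$ items of $C_a$ (as $m_a\le 2k_a$), symmetrically in $C_b$, and no other category is ever touched, let alone overloaded. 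Hence $\max_{\cA\in\mathcal C_\kappa(I')}\USW(\cA)\ge\USW(\cB)\ge\tfrac{k_a}{m_a}\alpha+\tfrac{k_b}{m_b}\beta$, so
\[
\frac{\OPTUSW(I')}{\max_{\cA\in\mathcal C_\kappa(I')}\USW(\cA)}\le\frac{\alpha+\beta}{\tfrac{k_a}{m_a}\alpha+\tfrac{k_b}{m_b}\beta}.
\]
Once one knows $\alpha=\beta=1$, this equals $\tfrac{2}{k_a/m_a+k_b/m_b}\le\tfrac{2}{k_1/m_1+k_2/m_2}$, the final step because $a\ne b$ forces the sorted pair $(k_a/m_a,k_b/m_b)$ to dominate $(k_1/m_1,k_2/m_2)$ coordinatewise ($k_1/m_1$ is the smallest ratio, $k_2/m_2$ the second smallest, and one cannot select two distinct categories both strictly below the second smallest).

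\emph{Main obstacle.} Everything above is routine except the claim $\alpha=\beta=1$, which I expect to be the heart of the argument. The raw output of Lemma~\ref{lem:red} only guarantees that each agent's \emph{received} utility sits in one category; to get $\alpha=\beta=1$ one must further reduce $I'$ so that each agent effectively \emph{values} only their own overloaded category — e.g.\ by zeroing each agent's utility for the other agent's items (those items stay put, so $\OPTUSW$ is unchanged while the best cardinal welfare can only shrink, hence the ratio weakly rises), and then rescaling each valuation to be normalized — and the delicate bookkeeping is verifying that the rescaling step remains ratio-monotone, since it inflates numerator and denominator together. Alternatively, one can keep $I'$ as is and bound $\tfrac{\alpha+\beta}{(k_a/m_a)\alpha+(k_b/m_b)\beta}$ directly, exploiting the cross-category valuations that normalization forces between $C_a$ and $C_b$ to exclude the asymmetric case in which the agent overloaded in the smaller-ratio category has the larger optimal utility; either route is where the real work lies.
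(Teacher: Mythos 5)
Your lower-bound construction and the overall skeleton of the upper bound (invoke Lemma~\ref{lem:red}, observe $a\neq b$ from $m_j\le 2k_j$, let each agent keep their $k$ most-valued items in their overloaded category, then use the sorted-category ordering) all match the paper. But the step you flag as the ``main obstacle'' is a genuine gap, and it sits at the crux: without it your argument only yields $\frac{\alpha+\beta}{\frac{k_a}{m_a}\alpha+\frac{k_b}{m_b}\beta}$, which does \emph{not} imply the claimed bound. Concretely, if $\frac{k_a}{m_a}<\frac{k_b}{m_b}$ and $\alpha>\beta$ (e.g.\ $\alpha=1$, $\beta$ small, which is consistent with normalization since optimality of $\cA^*$ only forces $\beta\ge 1-\alpha$ and $\alpha\ge 1-\beta$), the expression tends to $\frac{m_a}{k_a}$, strictly exceeding $\frac{2}{k_1/m_1+k_2/m_2}$. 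The reason the theorem survives is exactly the cross-valuation you mention in passing: when $\beta<1$, agent $2$ holds $1-\beta$ worth of value on agent $1$'s bundle, so the $m_a-k_a$ displaced items of $C_a$ are not worthless to their new owner, and the cardinal allocation $\cB$ you construct undercounts the achievable welfare. Your bound discards precisely the term that compensates for the asymmetry, so the inequality you need cannot be recovered from $\cB$ alone.

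The paper closes this by two further instance modifications rather than by bounding $\frac{\alpha+\beta}{c_a\alpha+c_b\beta}$ directly: first it zeroes each agent's utility outside their own bundle $A^*_{ij_i}$ (this preserves $\OPTUSW$ and can only shrink the best cardinal welfare, so the ratio weakly rises --- this is the first half of the route you sketch), and then it renormalizes each agent's utility for $A^*_{ij_i}$ up to $1$, arguing the ratio again weakly rises, which delivers $\alpha=\beta=1$ and hence $\frac{2}{\frac{k_{j_1}}{|A^*_{1j_1}|}+\frac{k_{j_2}}{|A^*_{2j_2}|}}\le\frac{2}{\frac{k_1}{m_1}+\frac{k_2}{m_2}}$. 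So your diagnosis of where the work lies is accurate, but the proof as written is incomplete: you must either carry out that zeroing-and-renormalization reduction (and verify monotonicity of the ratio under the renormalization, which is the delicate part) or redo the welfare accounting for $\cB$ so that agent $2$ is credited with the displaced items of $C_a$ (and symmetrically), and then optimize the resulting two-variable expression. As it stands, neither route is executed.
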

\begin{proof}
    By Lemma~\ref{lem:red}, it suffices to focus on the case where in $\cA^*$, agent $1$ (resp. agent $2$) exceeds the cardinality constraint of category $j_1$ (resp. $j_2$). Note that due to the ordering of our categories, it is `weakly better' to consider categories $1$ and $2$.
    Moreover in $\cA^*$, each agent $i$ only receives non-zero utility from $C_{j_i}$. Note that we must have $j_1\neq j_2$ due to $\frac{m_j}{k_j} \leq 2$ for all $j$.

    We first prove the upper bound. Consider another (possibly non-normalized) instance $I'$ that only differs from $I$ in utility functions. In $I'$, agent $i$ has utility function $u'_i(g)=u_i(g)$ if $g\in A^*_{ij_i}$ and $u'_i(g)=0$ otherwise.
    One can verify that the welfare of utilitarian-optimal allocation of $I$ is equal to that of $I'$, while the maximum welfare of cardinal allocations is weakly decreased in $I'$. Accordingly, we have $\frac{\OPTUSW(I)}{\maxUSW(\mathcal{A})}\leq \frac{\OPTUSW(I')}{\maxUSWa(\mathcal{A})}$.
    We then convert $I'$ into a normalized instance $I''$ by increasing agent $i$'s utility for $A^*_{ij_i}$ to $1$ in a way such that the `price of cardinality' ratio weakly increases; this can be done by increasing the utility of items valued the most by both agents. 
    Note that $I''$ is a normalized instance where in $\cA^*$, each agent $i$ receives utility $1$ from obtaining all of the items which they positively value.

    Finally, we have
    \begin{align*}  
    \frac{\OPTUSW(I)}{\maxUSW(\mathcal{A})}&\leq \frac{\OPTUSW(I'')}{\maxUSWb(\mathcal{A})}\\
    &\leq \frac{2}{\frac{k_{j_1}}{|A^*_{1j_1}|} + \frac{k_{j_2}}{|A^*_{2j_2}|}}\\
    &\leq \frac{2}{\frac{k_{j_1}}{m_{j_1}} + \frac{k_{j_2}}{m_{j_2}}}\leq \frac{2}{\frac{k_1}{m_1} + \frac{k_2}{m_2}},
    \end{align*}
    concluding the proof of the upper bound.
    
    For the lower bound, consider the instance $I$ where agent $1$ values each item in category $1$ at $\frac{1}{m_1}$ utility and agent $2$ values each item in category $2$ each at $\frac{1}{m_2}$ utility. Clearly, $\frac{\OPTUSW(I)}{\maxUSW(\mathcal{A})}=\frac{2}{\frac{k_1}{m_1} + \frac{k_2}{m_2}}$ for this instance.
\end{proof}
We give the utilitarian price of cardinality for general $n$.
\begin{theorem}\label{thm:multi-general}
    For general $n$, the utilitarian price of cardinality is $\frac{m_1}{k_1}$.
\end{theorem}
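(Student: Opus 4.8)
The goal is to show the utilitarian price of cardinality for general $n$ in the multi-category setting equals $\frac{m_1}{k_1}$, where category $1$ is the one minimizing $\frac{k_j}{m_j}$. The proof splits into an upper bound (every instance has ratio at most $\frac{m_1}{k_1}$) and a matching lower bound (an instance achieving it). The lower bound should be the easy direction: I would take the instance with $n$ categories where $\frac{k_1}{m_1}=\dots=\frac{k_n}{m_n}$, let agent $i$ value each item in category $i$ uniformly at $\frac{1}{m_i}$ and value nothing else. Then the utilitarian optimum assigns all of category $i$ to agent $i$ for total welfare $n$, while any cardinal allocation gives agent $i$ at most $k_i$ items from category $i$, so $\USW\le n\cdot\frac{k_1}{m_1}$ (using that all ratios coincide); hence the ratio is $\frac{m_1}{k_1}$. (For general categories without the divisibility/equality condition the bound is merely claimed tight in this special case, so this suffices.)

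\textbf{Upper bound.} Fix an arbitrary instance $I$ with utilitarian-optimal allocation $\cA^*$. The key inequality to establish is $\OPTUSW(I)\le \frac{m_1}{k_1}\cdot \maxUSW(\cA)$, equivalently that there is a cardinal allocation recovering a $\frac{k_1}{m_1}$-fraction of the optimum. The natural approach is a \emph{scaling/averaging} argument: from $\cA^*$, construct a randomized cardinal allocation whose expected welfare is at least $\frac{k_1}{m_1}\OPTUSW(I)$, then derandomize. Concretely, within each category $C_j$ and each agent $i$, agent $i$'s bundle $A^*_i\cap C_j$ may have more than $k_j$ items; I want to keep each item with probability (at least) $\frac{k_j}{m_j}\ge \frac{k_1}{m_1}$ in a way that respects the cardinality constraint, and reassign discarded items to agents who value them nonnegatively without decreasing welfare. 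Since utilities are additive and nonnegative, keeping each item of $A^*_i$ independently with probability $\frac{k_1}{m_1}$ gives expected retained welfare $\frac{k_1}{m_1}u_i(A^*_i)$ per agent; the subtlety is enforcing $|A_i\cap C_j|\le k_j$ while not throwing away the retained items' value — but any item an agent drops can be given to some other agent (there are enough total ``slots'' since $\sum_j nk_j\ge \sum_j m_j=m$), and this only helps welfare. Alternatively, and perhaps more cleanly, I would mimic the greedy reassignment procedure of Lemma~\ref{lem:conscard} category by category: process each category independently, and within category $j$, greedily move least-valuable items from over-full bundles to under-full ones; the worst case within category $j$ loses at most a $\frac{k_j}{m_j}$-fraction guarantee, and taking the minimum over $j$ gives $\frac{k_1}{m_1}$.

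\textbf{Main obstacle.} The delicate point is that the adversary can concentrate \emph{all} of the optimal welfare into a single category — the one with the worst ratio $\frac{k_1}{m_1}$ — so the bound cannot be improved by averaging over categories; the per-category analysis must be tight exactly for category $1$. The technical work is therefore showing that within a single category with $m_j$ items and constraint $k_j$, starting from an optimal assignment, one can always retain a $\frac{k_j}{m_j}$-fraction of each agent's value; this is essentially a pigeonhole/rearrangement claim: if agent $i$ holds $\ell>k_j$ items of $C_j$, keeping her top $k_j$ retains at least $\frac{k_j}{\ell}\ge\frac{k_j}{m_j}$ of $u_i(A^*_i\cap C_j)$, and the displaced items are absorbed by agents below their quota (existence of such agents follows from $m_j\le nk_j$, exactly as in the termination argument of Lemma~\ref{lem:conscard}). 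I expect the bookkeeping to combine the single-category argument already used for Theorem~\ref{thm:pocusw1-ak}/Lemma~\ref{lem:conscard} with the observation that $\frac{k_1}{m_1}\le\frac{k_j}{m_j}$ for all $j$ means the category-$1$ guarantee dominates. A proof that these displaced items never cascade badly across categories (they don't, since categories are disjoint and handled independently) closes the argument.
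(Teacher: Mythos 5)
Your proposal is correct and follows essentially the same route as the paper: the upper bound comes from letting each agent keep her $k_j$ most-valued items in each category (retaining a $\frac{k_j}{|A^*_{ij}|}\geq\frac{k_j}{m_j}\geq\frac{k_1}{m_1}$ fraction of her optimal utility, with displaced items absorbed harmlessly by under-quota agents since $m_j\leq nk_j$), and the lower bound uses the same $n$-category instance with equal ratios $\frac{k_j}{m_j}$ and agent $i$ valuing only category $i$ uniformly. The randomized-rounding detour is unnecessary but harmless; the deterministic "keep the top $k_j$" argument you settle on is exactly the paper's.
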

\begin{proof}
    We first prove the upper bound. Given an instance $I$, let $\cA^*$ be its utilitarian-optimal allocation. Then we have 
    $$
    \begin{aligned}
    &\frac{\OPTUSW(I)}{\maxUSW(\mathcal{A})}\\
    & \leq \frac{ \sum_{j\in [h]}\sum_{i\in S_j}u_i(A^*_{ij}) + \sum_{j\in [h]}\sum_{i\in N\setminus S_j} u_i(A^*_{ij}) }{\sum_{j\in [h]}\sum_{i\in S_j} \frac{k_j}{|A^*_{ij}|} u_i(A^*_{ij}) + \sum_{j\in[h]}\sum_{i\in N\setminus S_j} u_i(A^*_{ij})} \\
    & \leq \frac{ \sum_{j\in [h]}\sum_{i\in S_j} u_i(A^*_{ij})}{\sum_{j\in [h]}\sum_{i\in S_j} \frac{k_j}{|A^*_{ij}|} u_i(A^*_{ij})} \leq \frac{m_1}{k_1},
    \end{aligned}
    $$
    where the last inequality transition is because for every $j\in [h]$, $\frac{k_j}{|A^*_{ij}|} \leq \frac{k_j}{m_j} \leq \frac{k_1}{m_1}$.

    For the lower bound, consider an instance $I$ with $h=n$ categories, and where each category $j\in [h]$ has the same cardinality constraint of $k_j=k$ and same number of items $m_j=q$ items, where $q$ is divisible by $k$.
    Suppose that in this instance, for each agent $i$, $u_i(g) = \frac{1}{q}$ if $g\in C_i$, and $u_i(g)=0$ otherwise.
    Clearly, we have $\OPTUSW(I) = n$.
    In the utilitarian-optimal cardinal allocation $\cA^k$, each agent $i$ receives a utility of $\frac{k}{q}$. Thus, we have $\USW(\cA^k) = \frac{nk}{q}$, and therefore, the utilitarian price of cardinality is at least 
    $
    \frac{\OPTUSW(I)}{\USW(\cA^k)} = \frac{q}{k} = \frac{m_1}{k_1},
    $
    completing the proof.
\end{proof}
This result is roughly tight in the sense that the utilitarian price of cardinality is at most $\frac{m_1}{k_1}$ for any instance with cardinality constraints $\kappa=(k_1,\dots,k_j)$, and is precisely $\frac{m_1}{k_1}$ for any instance where there are at least $n$ categories, and $\frac{k_1}{m_1}=\frac{k_2}{m_2}=\dots=\frac{k_n}{m_n}$. 

Finally, we mention a result on computing utilitarian-optimal cardinal allocations, similar to Proposition~\ref{prop:computil}.
\begin{proposition}\label{prop:computilmulti}
    Given a multiple-category instance $I$ and cardinality constraints $\kappa$, the utilitarian-optimal cardinal allocation can be found in polynomial time, and has a utilitarian social welfare of at least $\frac{k_1}{m_1}\cdot\OPTUSW(I)$. 
\end{proposition}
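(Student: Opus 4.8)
The plan is to dispatch the two claims separately, treating the welfare bound as an immediate consequence of the price-of-cardinality result and spending the effort on the algorithm. For the welfare guarantee, Theorem~\ref{thm:multi-general} gives $\OPTUSW(I)\le \frac{m_1}{k_1}\maxUSW(\cA)$ for every instance $I$, which rearranges to $\maxUSW(\cA)\ge \frac{k_1}{m_1}\OPTUSW(I)$; so any utilitarian-optimal cardinal allocation --- in particular the one the algorithm outputs --- meets the stated bound, with nothing further to prove.

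For the algorithmic part, the key structural observation I would use is that the problem decomposes across categories. Since $C_1,\dots,C_h$ partition $M$ and the $u_i$ are additive, for any allocation $\cA=(A_1,\dots,A_n)$ we have $\USW(\cA)=\sum_{j\in[h]}\sum_{i\in N}u_i(A_i\cap C_j)$, and the constraint defining a cardinal allocation, $|A_i\cap C_j|\le k_j$, only couples the bundles within a single category. Hence the restrictions $(A_1\cap C_j,\dots,A_n\cap C_j)$ of a cardinal allocation to the categories can be chosen independently, and a utilitarian-optimal cardinal allocation is obtained by solving, for each $j$ separately, the problem of partitioning the $m_j$ items of $C_j$ among the agents so as to maximize $\sum_{i\in N}u_i(A_i\cap C_j)$ subject to each agent receiving at most $k_j$ of them.

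Each of these $h$ subproblems is a maximum-weight assignment problem, solved exactly as in Proposition~\ref{prop:computil}: form a complete bipartite graph whose left side has $k_j$ copies of each agent and whose right side has the $m_j$ items of $C_j$ together with $k_jn-m_j\ge 0$ zero-valued dummy items (the nonnegativity uses $k_j\ge m_j/n$), so both sides have size $k_jn$; give the edge between a copy of agent $i$ and a real item $g$ weight $u_i(g)$, and give dummy edges weight $0$. A maximum-weight perfect matching, computable by the Hungarian algorithm \cite{https://doi.org/10.1002/nav.3800020109}, assigns every real item of $C_j$ to some agent while giving each agent at most $k_j$ items, and it maximizes that category's contribution to the utilitarian welfare. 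Concatenating the $h$ partial allocations produces a cardinal allocation that, by the decomposition argument, is utilitarian-optimal among all cardinal allocations; the running time is polynomial since one solves $h\le m$ matching instances each of size $O(nm_j)$. The only step that requires genuine care --- and the closest thing to an obstacle here --- is verifying this decomposition, i.e., that independently optimizing each category cannot be beaten by any jointly chosen cardinal allocation; this holds because both the objective and the feasible set split as a direct sum over the categories, but it should be stated explicitly rather than taken for granted.
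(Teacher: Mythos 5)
Your proposal is correct and takes essentially the same approach as the paper: one maximum-weight bipartite matching per category (with $k_j$ agent copies and dummy items), justified by the fact that both the objective and the cardinality constraints decompose across categories, with the welfare guarantee read off from the price-of-cardinality bound. The only difference is that you spell out the decomposition argument explicitly and correctly attribute the welfare bound to Theorem~\ref{thm:multi-general}, whereas the paper's one-line proof cites Theorem~\ref{thm:pocusw1-ak} (apparently a typo for the multi-category theorem).
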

\begin{proof}
  The proof is almost identical to the proof of Proposition~\ref{prop:computil}, but we instead construct a separate complete bipartite graph for each category $j\in [h]$, and compute a maximum weight bipartite matching for each of these graphs. The runtime remains in polynomial time, and the utilitarian social welfare guarantee follows immediately from Theorem~\ref{thm:pocusw1-ak}.
\end{proof}
\subsection{Egalitarian Social Welfare}
Finally, for egalitarian social welfare, we present bounds for the price of cardinality which are exact and tight for any $n$, $\kappa=(k_1,\dots,k_h)$, and $m_1,\dots,m_h$.
\begin{theorem}\label{thm:multi-egal}
    If $n\leq \sum_{j=2}^hm_j+1$, then the egalitarian price of cardinality is $\frac{m_1}{k_1}$. If $n>\sum_{j=2}^hm_j+1$, then the egalitarian price of cardinality is
    $$\max_{j\in [h]}\left\{\frac{m_j-\max\{n-1-\sum_{t\neq j}m_t,0\}}{k_j}\right\}.$$
\end{theorem}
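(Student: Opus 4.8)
The plan is to mirror the structure of the single-category egalitarian proof (Theorem~\ref{thm::ega-poc}), handling the upper and lower bounds separately in each of the two regimes. For the upper bound, I would start from the unconstrained egalitarian-optimal allocation $\cA^*=(A_1^*,\dots,A_n^*)$, assume $\OPTESW(I)>0$, and for each category $j$ let $S_j$ be the set of agents exceeding $k_j$ there. I would then construct a cardinal allocation $\cA'$ by having each agent in $S_j$ keep their $k_j$ most valued items of category $j$, while redistributing the freed-up items so that no agent in $N\setminus S_j$ loses category-$j$ value. As in the single-category case, each agent $i\in S_j$ retains at least $\frac{k_j}{|A_{ij}^*|}u_i(A_{ij}^*)$ from category $j$, so $u_i(A_i')\ge \min_j \frac{k_j}{|A_{ij}^*|}\cdot u_i(A_i^*)$ summed appropriately; the crux is bounding $|A_{ij}^*|$ from above. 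The key combinatorial fact is: since every one of the $n$ agents receives at least one item in total under $\cA^*$ (this is where $\OPTESW(I)>0$ is used), the other $n-1$ agents collectively occupy at least $\max\{n-1-\sum_{t\neq j}m_t,0\}$ slots \emph{inside} category $j$, so $|A_{ij}^*|\le m_j-\max\{n-1-\sum_{t\neq j}m_t,0\}$. This yields $\ESW(\cA')\ge \min_j \frac{k_j}{m_j-\max\{n-1-\sum_{t\neq j}m_t,0\}}\cdot\OPTESW(I)$, giving the claimed upper bound; when $n\le\sum_{j=2}^h m_j+1$ the $\max$ over $j$ is dominated by $j=1$ (using the ordering $\frac{k_1}{m_1}\le\dots\le\frac{k_h}{m_h}$ and the tie-break), collapsing to $\frac{m_1}{k_1}$.

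For the matching lower bounds I would exhibit explicit instances. In the regime $n\le\sum_{j=2}^h m_j+1$: give $n-1$ agents a single distinct "private" item each (drawn from categories $2,\dots,h$, which is possible precisely because $n-1\le\sum_{j=2}^h m_j$), and let agent $n$ value only category $1$, uniformly at $\frac{1}{m_1}$; then $\OPTESW=1$ but agent $n$ can hold at most $k_1$ category-$1$ items, so the best cardinal egalitarian welfare is $\frac{k_1}{m_1}$. In the regime $n>\sum_{j=2}^h m_j+1$: fix the maximizing category $j^\star$; have $\sum_{t\neq j^\star}m_t$ of the agents take one private item each from the other categories, let one special agent value category $j^\star$ uniformly, and distribute category $j^\star$'s remaining $m_{j^\star}-\max\{n-1-\sum_{t\neq j^\star}m_t,0\}$ items suitably among the leftover agents so the unconstrained optimum is still $1$ while the special agent is capped at $k_{j^\star}$ out of $m_{j^\star}-\max\{n-1-\sum_{t\neq j^\star}m_t,0\}$ positively-valued items; a careful accounting shows $\OPTESW=1$ and $\maxESW(\cA)=\frac{k_{j^\star}}{m_{j^\star}-\max\{n-1-\sum_{t\neq j^\star}m_t,0\}}$.

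I expect the main obstacle to be the lower-bound construction in the second regime: one must simultaneously (i) give $n-1$ agents nonzero utility under the unconstrained optimum so that $\OPTESW=1$ cannot be beaten, (ii) ensure the "tight" agent's positively-valued set within $C_{j^\star}$ has exactly $m_{j^\star}-\max\{n-1-\sum_{t\neq j^\star}m_t,0\}$ items — no more, no less — which forces a precise count of how many category-$j^\star$ slots are consumed by other agents, and (iii) verify no cardinal allocation can do better for \emph{any} agent, not just the tight one. The divisibility of $m_{j^\star}-\max\{\dots,0\}$ by $k_{j^\star}$ may not hold in general, so unlike the single-category egalitarian case some care is needed to confirm the bound is still exactly attained (or to argue the construction can always be arranged, e.g.\ by splitting value unevenly among the non-tight agents without lowering their utilities below the threshold). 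A secondary, more routine obstacle is cleanly proving the case split between the two regimes is exhaustive and that the $j=1$ term dominates in the first regime, which follows from the category ordering but should be stated explicitly.
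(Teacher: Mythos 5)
Your proposal matches the paper's proof in all essentials: the upper bound via each agent keeping their $k_j$ most valued items per category together with the combinatorial bound $|A^*_{ij}|\le m_j-\max\{n-1-\sum_{t\neq j}m_t,0\}$ (forced by every agent needing at least one item when $\OPTESW>0$), and the same two lower-bound constructions with private items plus one uniformly-valuing agent. Your flagged divisibility worry is a non-issue — the capped agent values $m_{j^\star}-c_{j^\star}$ items at $\tfrac{1}{m_{j^\star}-c_{j^\star}}$ each and can hold at most $k_{j^\star}$ of them in any cardinal allocation, so the bound $\tfrac{k_{j^\star}}{m_{j^\star}-c_{j^\star}}$ is attained exactly without any divisibility condition.
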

\begin{proof}
    We begin with the upper bound and fix instance $I$. Let $\cA^*=(A^*_1,\ldots,A^*_n)$ be the egalitarian-optimal allocation for $I$ and let $A^*_{ij}$ be the bundle agent $i$ receives from category $j$. If $\ESW(\cA^*)=0$, then the statement trivially holds. We further assume $\ESW(\cA^*)>0$.

    We first show a general upper bound of $\frac{m_1}{k_1}$. Note that in the egalitarian-optimal cardinal allocation, every agent $i$ can keep their $k_j$ most valued items and receives a utility of at least $\sum_{j\in [h]}\frac{k_j}{m_j}u_i(A^*_{ij})\geq \frac{k_1}{m_1}u_i(A^*_{i}) \geq \frac{k_1}{m_1}\ESW(\cA^*)$.
    This bound holds for any $n$ and $m_j$, proving the first part of the theorem statement.

    We then strengthen the bound for the case where $n>\sum_{j=2}^hm_j+1$. Define, for each $j\in[h]$, $c_j:=\max\{ n-1-\sum_{t\neq j}m_t, 0\}$.
    By the definition of $c_j$, we claim that for any $i\in [n]$ and $j\in [h]$, $|A^*_{ij}| \leq m_j- c_j$ holds; otherwise, there must be one agent receiving no item in $\cA^*$, contradicting $\ESW(\cA^*)>0$.
    Then in the egalitarian-optimal cardinal allocation, each agent $i$ can obtain a utility of at least
    \begin{align*}
       \sum_{j\in[h]} \frac{k_j}{|A^*_{ij}|}u_i(A^*_{ij}) &\geq \sum_{j\in[h]} \frac{k_j}{m_j-c_j}u_i(A^*_{ij})\\
       &\geq \min\limits_{j\in[h]} \frac{k_j}{m_j-c_j}\sum_{j\in[h]}u_i(A^*_{ij})\\
       &= \min\limits_{j\in[h]} \frac{k_j}{m_j-c_j} u_i(A^*_i). 
    \end{align*}
    
    Therefore, the price of cardinality in this case is at most $\max_{j\in [h]}\frac{m_j-c_j}{k_j}$, equal to the expression in the theorem statement.

    For the lower bound, we first consider the case where $n\leq \sum_{j=2}^hm_j+1$ and show that the bound of $\frac{m_1}{k_1}$ is tight. Let agent 1 value each item of $C_1$ at $\frac{1}{m_1}$ utility each and let agents $2,\ldots,n$ only value one unique item each at 1 utility from $C_2,\ldots,C_h$.
    In the egalitarian-optimal allocation, each agent receives a utility of 1, leading to an optimal egalitarian welfare of 1. However, for cardinal allocations, the utility of agent 1 is at most $\frac{k_1}{m_1}$ and the lower bound of $\frac{m_1}{k_1}$ follows.

    For the case where $n>\sum_{j=2}^hm_j+1$, we first denote the category that maximizes $\max_{j\in[h]} \frac{m_j-c_j}{k_j}$ as $j^*$; recall that $c_j=\max\{ n-1-\sum_{t\neq j}m_t, 0 \}$.
    In the subcase where $c_{j^*}=0$, we let agents $1,\ldots,n-1$ value one unique item each at $1$ utility in any category $j\neq j^*$ and agent $n$ values each item in $C_{j^*}$ evenly, achieving at most $\frac{k_{j^*}}{m_{j^*}}$ utility in a cardinal allocation. Since the optimal egalitarian welfare is $1$, the respective lower bound follows.
    As for the subcase where $c_{j^*}>0$ (this indeed implies $j^*=1$), consider the instance $I$ where $\sum_{j\neq j^*}m_j$ of the agents value one unique item each at $1$ utility in some $C_j$ with $j\neq j^*$. Here, $n-1-\sum_{j\neq j^*} m_j$ of the remaining agents value one unique item each at $1$ utility in $C_{j^*}$, and the last agent values all $m_{j^*}-c_{j^*}$ remaining items in $C_{j^*}$ evenly.
    We have $\OPTESW(I)=1$, and note that the last agent can receive a utility of at most $\frac{k_{j^*}}{m_{j^*}-c_{j^*}}$ in any cardinal allocation, achieving the lower bound in the statement.
\end{proof}

Similar to the single category scenario, if we are given an egalitarian-optimal allocation, we can construct a cardinal allocation with an egalitarian welfare guarantee corresponding to the price of cardinality by letting each agent keep their $k_j$ most valued items in each category $j\in [h]$.
\section{Discussion}
In this work, we introduced the utilitarian and egalitarian prices of cardinality, which quantify the worst-case multiplicative loss of social welfare when cardinality constraints are imposed on the allocation. For both the single- and multi-category cases, we present tight bounds on the prices of cardinality, expressed as an exact (rather than asymptotic) function of the instance and cardinality parameters. Our results enable decision makers to make a clear, well-informed choice of cardinality constraint with respect to the level of balancedness and the potential loss of social welfare.

Our parametrized approach to the price of cardinality can be applied to other parametrized notions of fairness such as envy-freeness up to $k$ items (EF-$k$) or $\alpha$-maximin share ($\alpha$-MMS), providing similar insights to decision makers regarding the tradeoff between the level of fairness and the potential loss of social welfare.

An immediate open question is to find a more precise utilitarian price of cardinality in the case of multiple categories and $n\geq 3$ agents. Ideally, we would like the price to be tight for all possible values of $n$, $\{m_j\}_{j\in [h]}$, and $(k_j)_{j\in [h]}$, like our egalitarian prices of cardinality.
To pursue such a precise price, one approach is to characterize the worst case scenario. Unfortunately, our reduction (in Lemma \ref{lem:red}) for the case of two agents can not be immediately extended to the case where $n\geq 3$. Furthermore, part of the complete proof of Lemma~\ref{lem:1catuppercase2} relies on knowing the exact form of a parameter $s$, which we cannot find in the multi-category case due to the multivariate property of the problem. However, we believe that the greedy procedure in the proof of Lemma \ref{lem:conscard} could be helpful for identifying the worst case structure.

Another possible direction is finding the price of cardinality for the ``dual'' problem where each agent must receive \emph{at least} $k$ items. However, this type of constraint violates the hereditary property of matroids, and is generally not studied in other related work.

\section*{Acknowledgements}
The authors would like to thank Siddarth Barman, Xiaohui Bei, and the anonymous AAAI reviewers for their helpful comments. This work is funded by HKSAR RGC under Grant No. PolyU 15224823 and the Guangdong Basic and Applied Basic Research Foundation under Grant No. 2024A1515011524.

\bibliographystyle{alpha}
\bibliography{aaai25.bib}

\newpage
\appendix
\section{Omitted Proofs for Section~\ref{sec::single-cat}}
\subsection{Proof of Lemma~\ref{lem:1catuswlowergen}}
By applying basic calculus (which will also be shown in the proof of Lemma~\ref{lem:fix}), we find that $s=-1+\sqrt{1+\frac{m-1}{k}}$ is the unique global maximum of the price of cardinality lower bound of $\frac{1+s}{1+\frac{ks^2}{m-1}}$ (under the domain of $s>0$). As a result, if $m$ and $k$ are such that $s$ is not integral, we have another instance which shows that the utilitarian price of cardinality is at least $\frac{1}{2}(1+\sqrt{1+\frac{m-1}{k}})-1$. 

We now fix $m,n,k \in \mathbb{N}^{+}$ satisfying $n \geq \frac{m}{k}$ and $m-2\geq k$; note that if $m-1=k$, the utilitarian-optimal allocation is cardinal as the utilities are normalized. Let $s=-1+\sqrt{1+\frac{m-1}{k}}$ and denote $t\coloneqq \max\{ \lfloor s \rfloor, 1\}$. Note that for any $n\geq 2$, we have $t \leq n-1$.
Now consider an instance $I$ with $n$ agents and $m$ items, where the agents' utilities are described as follows;
\begin{itemize}
    \item for $i=1,\ldots,t$, if $j=(i-1)\lfloor \frac{m-1}{t} \rfloor + 1,\ldots, i\lfloor \frac{m-1}{t} \rfloor$, $u_i(g_j) = \frac{1}{\lfloor \frac{m-1}{t} \rfloor}$; otherwise, $u_i(g)=0$;
    \item for $i\geq t+1$, $u_i(g_m)=1$ and $u_i(g)=0$ for every $g\neq g_m$.
\end{itemize}
Note that for agent $t$, each item she values with non-zero utility has an index of $t\lfloor \frac{m-1}{t} \rfloor < m$ and thus the above instance is well-defined.
One can verify that $\OPTUSW(I)=1+t$.
Moreover, in the utilitarian-optimal allocation, we know that every agent $i\leq t$ exceeds the cardinality constraint as:
\begin{itemize}
    \item if $t=\lfloor s \rfloor$, we have $\frac{m-1}{t}>\frac{m-1}{s}>k$, and since $k\in \mathbb{N}^+$, we have $\lfloor\frac{m-1}{t}\rfloor>k$,
    \item if $t=1$, agent 1 receives $m-1>k$ items.
\end{itemize}

Accordingly, we have $\maxUSW(\mathcal{A}) = 1+\frac{kt}{\lfloor \frac{m-1}{t} \rfloor}$, showing that the utilitarian price of cardinality is at least 
$
\frac{1+t}{1+\frac{kt}{\lfloor \frac{m-1}{t} \rfloor}}$. Recall that $\frac{1}{2}(1+\sqrt{1+\frac{m-1}{k}})$ is equal to $\frac{1+s}{1+\frac{ks^2}{m-1}}$, and therefore to prove the theorem statement, it suffices to show that
$$
\frac{1+s}{1+\frac{ks^2}{m-1}} - \frac{1+t}{1+\frac{kt}{\lfloor \frac{m-1}{t} \rfloor}} \leq 1,
$$
or equivalently,
$$
s-t+\frac{kts}{\lfloor \frac{m-1}{t} \rfloor} - \frac{kts^2}{m-1} \leq 1+\frac{2ks^2}{m-1}+\frac{k^2s^2t}{(m-1)\lfloor \frac{m-1}{t}\rfloor}.
$$
Since $s-t\leq 1$, it further suffices to show that 
$$
\frac{kts}{\lfloor \frac{m-1}{t} \rfloor} - \frac{kts^2}{m-1} \leq \frac{2ks^2}{m-1}+\frac{k^2s^2t}{(m-1)\lfloor \frac{m-1}{t}\rfloor},
$$
or equivalently, 
$$
\left(\frac{2s}{t}+s\right)\lfloor \frac{m-1}{t} \rfloor + ks \geq m-1.
$$
We have
$$
\begin{aligned}
\left(\frac{2s}{t}+s\right)\lfloor \frac{m-1}{t} \rfloor + ks&\geq (2+s)\lfloor \frac{m-1}{t} \rfloor + ks\\
& > s\left(\frac{m-1}{t}-1\right)  +ks \\
& \geq \frac{s}{t}m-1+(k-1)s \\
& \geq m-1,
\end{aligned}
$$
proving that the utilitarian price of cardinality is at least $\frac{1}{2}(1+\sqrt{1+\frac{m-1}{k}}) -1$.

\subsection{Proof of Lemma~\ref{lem:fix}}
    Firstly, we know that $$\OPTUSW(I)=\sum_{i\in N\setminus S}u_i(A^*_i)+\sum_{i\in S}u_i(A^*_i)$$ and $$\maxUSW(\mathcal{A})\geq \sum_{i\in N\setminus S}u_i(A^*_i)+k\sum_{i\in S}\frac{u_i(A^*_i)}{|A^*_i|}.$$ Note that letting each agent $i\in S$ discard their $|A^*_i|-k$ least valued items yields in a (cardinal) partial allocation with the desired utilitarian welfare guarantee.
    We therefore have
    \begin{align*}
        \frac{\OPTUSW(I)}{\maxUSW(\mathcal{A})}&\leq \frac{\sum_{i\in N\setminus S}u_i(A^*_i)+\sum_{i\in S}u_i(A^*_i)}{\sum_{i\in N\setminus S}u_i(A^*_i)+k\sum_{i\in S}\frac{u_i(A^*_i)}{|A^*_i|}}\\
        &\leq \frac{1+\sum_{i\in S}u_i(A^*_i)}{1+k\sum_{i\in S}\frac{u_i(A^*_i)}{|A^*_i|}},
    \end{align*}
    where the last inequality is due to $\frac{a+b}{a+c}\leq \frac{b}{c}$ for $a,b,c>0$. This proves the first inequality of the lemma statement.
    
    We next show the second inequality through derivatives.
    Consider the multivariable function $F(u_1(A^*_1),\ldots,u_{|S|}A^*_{|S|})=\frac{1+\sum_{i\in S}u_i(A^*_i)}{1+k\sum_{i\in S}\frac{u_i(A^*_i)}{|A^*_i|}}$ with domain $0\leq u_i(A^*_i) \leq 1$ for all $i\in S$. 
    For each $i' \in S$, the derivative $\frac{\partial F}{\partial u_{i'}(A^*_{i'})}$ of the function $F$ with respect to $u_{i'}(A^*_{i'})$ is
    $$
    \frac{\frac{|A^*_{i'}|-k}{|A^*_{i'}|}+k\left(\sum_{i\in S\setminus \{i'\}}u_i(A^*_i)\left(\frac{1}{|A^*_i|}-\frac{1}{|A^*_{i'}|}\right)\right)}{\left(1+k\sum_{i\in S}\frac{u_i(A^*_i)}{|A^*_i|}\right)^2}.
    $$
    The numerator of the derivative is independent of $u_{i'}(A^*_{i'})$, so the expression has no stationary points and is either monotonic increasing, decreasing or constant. 
    Therefore, $F$ is maximized when every $u_i(A^*_i)$ is either 1 or 0; note that for any $i\in S$, $u_i(A^*_i) \in [0,1]$. Also note that for a specific $i'\in S$, setting $u_{i'}(A^*_{i'})=0$ may violate the property of $i' \in S$. However, as we are only concerned with establishing an upper bound for $F$, letting $u_{i'}(A^*_{i'})=0$ for some $i'$ does not result in any violation.

    Denote by $S'\subseteq S$ the set of agents who have utility $1$ in the solution maximizing $F$. Observe that for there exists $i\in S$ such that the derivative of $F$ with respect to some $u_i(A^*_i)$ is positive (e.g. the agent whose bundle under $\cA^*$ has the most items). Therefore $S' \neq \emptyset$, so the following holds:
    $$
    F=\frac{1+\sum_{i\in S}u_i(A^*_i)}{1+k\sum_{i\in S}\frac{u_i(A^*_i)}{|A^*_i|}} \leq \frac{1+ \sum_{i \in S'} 1}{1+k\sum_{i \in S'}\frac{1}{A^*_i}} \leq \frac{1+|S'|}{1+\frac{k|S'|^2}{m-1}},
    $$
    where the first inequality is due to the construction of $S'$. The second inequality is due to the arithmetic mean-harmonic mean (AM-HM) inequality and the fact that $\sum_{i\in S}|A^*_i| \leq m-1$ (the agents of $N\setminus S$ must receive at least one item in $\cA^*$ so that the sum of their utilities can be 1).
    Now consider the function $G(|S'|) = \frac{1+|S'|}{1+\frac{k|S'|^2}{m-1}}$ with domain $0\leq |S'| \leq n$. Its derivative with respect to $|S|$ is
    $$
    \frac{\partial G}{\partial |S'|} = \frac{1}{\left( 1+\frac{k|S'|^2}{m-1} \right)^2}\cdot \left( 1-\frac{k|S'|(2+|S'|)}{m-1} \right).
    $$
    This derivative is non-negative for $-1-\sqrt{1+\frac{m-1}{k}}\leq |S'| \leq -1+\sqrt{1+\frac{m-1}{k}}$. Since $m\leq kn$ always holds, we have $-1+\sqrt{1+\frac{m-1}{k}} \leq \sqrt{n} < n$ for all $n\geq 2$.
    Hence, $G$ achieves its maximum value when $|S'| = -1+\sqrt{1+\frac{m-1}{k}}$.
    Therefore, by setting $s=-1+\sqrt{1+\frac{m-1}{k}}$, we have 
    \begin{align*}
    \frac{\OPTUSW(I)}{\maxUSW(\mathcal{A})} &\leq \max_{0\leq u_i(A^*_i)\leq 1}F\\
    &\leq \max_{0\leq |S'| \leq n}G\\
    &\leq  \frac{1+s}{1+\frac{ks^2}{m-1}},
    \end{align*}
    completing the proof.

\subsection{Proof of Lemma~\ref{lem:conscard}}
    Specifically, we will show that there exists a cardinal allocation $\cA$ such that
    \begin{align*}
    \USW(\cA)&\geq \sum_{i\in R\cup T}u_i(A^*_i)+\sum_{i\in S}\frac{k}{|A^*_i|}u_i(A^*_i)\\
    &\quad +\sum_{i\in S}\frac{|A^*_i|-k}{|A^*_i|}u_{i^\dagger}(A^*_i),
    \end{align*}
    which suffices because $$\sum_{i\in R\cup T}u_i(A^*_i)=1-\sum_{i\in S}u_{i^\dagger}(A^*_i).$$ (Recall that $I$ is preprocessed.)

    Fix an agent $t\in R$. For each $i\in S$, denote $L_i\subseteq A^*_i$ as the set of $|A^*_i|-k$ items which minimizes the loss of utilitarian social welfare when they are reallocated from $i$ to agent $t$. Accordingly, for each $i\in S$, we have
    $$
    \frac{\sum_{g\in L_i} ( u_i(g)-u_t(g))}{|A^*_i|-k}\leq \frac{u_i(A^*_i)-u_t(A^*_i)}{|A^*_i|},$$
    which is equivalent to
    $$ \sum_{g\in L_i} (u_i(g)-u_t(g)) \leq \frac{|A^*_i|-k}{|A^*_i|}(u_i(A^*_i)-u_t(A^*_i)).
    $$
    Hence, the total loss of utilitarian social welfare from reallocating $L_i$ from every $i\in S$ to agent $t$ is at most $\sum_{i\in S}\frac{|A^*_i|-k}{|A^*_i|}(u_i(A^*_i)-u_t(A^*_i))$, and therefore,
    the utilitarian social welfare of the resulting allocation (which may not be cardinal) is at least 
    $$\sum_{i\in R\cup T}u_i(A^*_i)+\sum_{i\in S}\frac{k}{|A^*_i|}u_i(A^*_i)+\sum_{i\in S}\frac{|A^*_i|-k}{|A^*_i|}u_t(A^*_i).$$

    As letting one agent $t\in R$ receive all excess items from $\{A^*_i\}_{i\in S}$ may not yield a cardinal allocation, we then consider the cardinal allocation achieved by reassigning items from $\{A^*_i\}_{i\in S}$ to agents $R$ via a greedy procedure described as follows.
    
    The procedure starts from $\cA^*$, and at each step, reassigns the item with the least utility loss from some \emph{unsatisfied} agent's bundle to some \emph{active} agent;
    an agent is \emph{unsatisfied} if she receives more than $k$ items, and is \emph{active} if she receives less than $k$ items.
    We consider the following reassignment process:
\begin{itemize}
    \item Step 1: Set $\cB \leftarrow \cA^*$ as the initial allocation, $P\leftarrow S$ as the initial set of unsatisfied agents, and $Q\leftarrow R$ as the initial set of active agents;
    \item Step 2: If there are no unsatisfied agents, then terminate and output the underlying allocation $\cB$ (this will be $\cA^k$). Otherwise, find the item $e^*\in \bigcup_{i\in P} B_i$ 
    and an active agent $i^*\in Q$ such that reassigning $e^*$ to agent $i^*$ causes the minimum utilitarian social welfare loss among all single-item reassignments from items of unsatisfied agents to active agents. Reassign $e^*$ to agent $i^*$, and update $\cB$ accordingly.
    \item Step 3: Update $P$ and $Q$, and return to Step 2.
\end{itemize}

As $m\leq kn$, the procedure can terminate and the returned allocation $\cA^k$ is cardinal.
Moreover, during the reassignment process, an active agent can never become unsatisfied and any unsatisfied agent can never become active.

Without loss of generality, let agent $n$ be the chosen active agent at the last reassignment step. Note that $n\in R$ must hold.
For any $j\in S$, we define the loss of utility from reassigning item $e\in A^*_j$ to agent $n$ as $l^n_j(e) \coloneqq u_{j}(e) - u_n(e)$, and moreover, let $L^n_j$ be the set of the $|A^*_j| - k$ items of $A^*_j$ with the lowest $l^n_j(e)$.
By this construction, $\sum_{ j\in S} \sum_{g \in L^n_j} l_j^n(g)$ is the welfare loss of reassigning $\bigcup_{j \in S}\bigcup_{ g\in L^n_j} g$ to agent $n$.
Based on the above arguments and the fact that $n\in R$, we have $\sum_{ j\in S} \sum_{g \in L^n_j} l_j^n(g) \leq \sum_{i\in S}\frac{|A^*_i|-k}{|A^*_i|}(u_i(A^*_i)-u_{n}(A^*_i))$.
Our last step is to show that $\USW(\cA^*)-\USW(\cA^k)\leq \sum_{ j\in S} \sum_{g \in L^n_j} l_j^n(g)$.

Consider an arbitrary agent $j\in S$. According to the reassignment process, $|A^*_j| - k$ items of $A^*_j$ are not allocated to $j$ in $\cA^k$. 
Suppose $0< \delta_1 \leq \delta_2\leq \cdots \leq \delta_{|A^*_j|-k}$, where $\delta_i$ refers to the $i$-th lowest utility loss in the aforementioned reassignment of $|A^*_j| - k$ items. 
Recall that $L^n_j$ is the set of the $|A^*_j| - k$ items of $A^*_j$ with the least $l_j^n(e)$. Let $L^n_j = \{ g_1,g_2,\ldots,g_{|A^*_j| - k} \}$ and $l^n(g_t) \leq l^n(g_{t+1}) $ for all $t \leq |A^*_j| - k$.
\begin{claim}\label{claim:ak-1}
    For any $1 \leq t \leq |A^*_j| - k$, it holds that $\delta_t \leq l_j^n(g_{t})$.
\end{claim}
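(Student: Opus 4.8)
The plan is to prove Claim~\ref{claim:ak-1} by a comparison argument between the greedy reassignment process and the ``idealized'' benchmark in which all $|A^*_j|-k$ excess items of agent $j$ go directly to agent $n$. First I would fix $j\in S$ and set up notation: let $e_1,\dots,e_{|A^*_j|-k}$ be the items removed from $A^*_j$ over the course of the greedy procedure, ordered so that the utility losses incurred at the moment of their reassignment are nondecreasing, i.e. $\delta_1\le\delta_2\le\dots\le\delta_{|A^*_j|-k}$; here $\delta_r$ is the \emph{actual} marginal welfare loss paid by greedy when it removed the $r$-th such item (this loss is $u_j(e_r)-u_{i_r}(e_r)$ for whichever active agent $i_r$ received it). On the benchmark side, $L^n_j=\{g_1,\dots,g_{|A^*_j|-k}\}$ with $l^n_j(g_1)\le\dots\le l^n_j(g_{|A^*_j|-k})$ are the $|A^*_j|-k$ items of $A^*_j$ cheapest to hand to agent $n$ specifically.

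The key step is the following exchange/counting observation: at the moment greedy is about to remove the $t$-th excess item from $j$'s bundle, agent $j$ still holds all items of $A^*_j$ except the $t-1$ already removed; in particular $j$ still holds at least $(|A^*_j|-k)-(t-1)+1 = |A^*_j|-k-t+2$ of the items of $L^n_j$ — more precisely, among the $t$ items $g_1,\dots,g_t$, at most $t-1$ have been removed so far, so at least one of $g_1,\dots,g_t$ is still in $j$'s bundle, call it $g_{t'}$ with $t'\le t$, hence $l^n_j(g_{t'})\le l^n_j(g_t)$. Crucially, agent $n$ is still \emph{active} at this point (since agent $n$ is, by the WLOG choice, the active agent chosen at the very last reassignment step of the entire procedure, so $n$ has not yet been filled to $k$ items at any earlier step). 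Therefore the single-item reassignment of $g_{t'}$ from $j$ to $n$ is a legal move available to greedy at this step, and its welfare loss is exactly $l^n_j(g_{t'})$. Since greedy picks the reassignment of minimum welfare loss among \emph{all} legal single-item moves from unsatisfied to active agents, the loss $\delta_t$ it actually pays satisfies $\delta_t\le l^n_j(g_{t'})\le l^n_j(g_t)$, which is precisely the claim.

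One subtlety I would be careful about: I must make sure the move ``give $g_{t'}$ to $n$'' really is among the moves greedy compares at that step, which requires $j$ to still be unsatisfied (true: $j$ has removed only $t-1\le|A^*_j|-k-1$ items, so still holds $\ge k+1$ items) and $n$ to still be active (true, by the last-step choice of $n$, as argued above; this is exactly why the WLOG reindexing in the proof of Lemma~\ref{lem:conscard} was made). A second point: the $\delta_r$'s are indexed by \emph{sorted} loss value, not by the order in which items were removed, so I should note that ``the $t$-th smallest actual loss is $\le$ (the actual loss greedy paid on some particular removal step)'' — I can take the removal step on which greedy extracted the item with the $t$-th smallest actual loss among $j$'s removals, and at that step the above argument applies verbatim with the same $t$, since at that moment at most $t-1$ of $j$'s removals (those with strictly smaller loss, hence among the first $t-1$ in sorted order) have occurred. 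I expect this bookkeeping — cleanly matching the sorted index $t$ on the $\delta$ side to ``at most $t-1$ prior removals'' so that one of $g_1,\dots,g_t$ survives — to be the main obstacle; the rest is the straightforward greedy-is-locally-optimal inequality.
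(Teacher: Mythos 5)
Your proof is correct and follows essentially the same route as the paper's: both arguments rest on the facts that agent $n$ remains active until the final step, that when the reassignment realizing $\delta_t$ occurs at most $t-1$ items have left $A^*_j$ so some $g_{t'}$ with $t'\le t$ survives in $j$'s bundle, and that the greedy rule then forces $\delta_t\le l^n_j(g_{t'})\le l^n_j(g_t)$. The only difference is presentational — you argue directly at the $t$-th removal step while the paper takes a minimal counterexample index $t'$ and derives a contradiction — and your explicit handling of the sorted-versus-temporal indexing of the $\delta_r$'s is the same bookkeeping the paper performs implicitly.
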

\begin{proof}[Proof of Claim~\ref{claim:ak-1}]
    For a contradiction, let $t'$ be the smallest index such that $\delta_{t'} > l^n(g_{t'})$.
    Since agent $n$ is the chosen active agent at the last reassignment step, they were active when the reassignment corresponding to $\delta_{t'}$ happened.
    For ease of presentation, we let round $s$ be the moment when the reassignment corresponding to welfare loss $\delta_{t'}$ happened, and accordingly, agent $j$ is unsatisfied at the start of round $s$.
    By $\delta_1 \leq \delta_2 \leq \cdots \leq \delta_{t'}$ and the reassignment rule, for any $t < t'$, the reassignment corresponding to welfare loss $\delta_t$ happened before round $s$.
    There are two possible cases depending on whether item $g_{t'}\in L_j^n$ is still in agent $j$'s bundle at the start of round $s$. If so, then reassigning $g_{t'}$ to agent $n$ results in a welfare loss $l^n(g_{t'}) < \delta_{t'}$, contradicting Step 2.
    
    We now consider the remaining case where item $g_{t'}$ is \textbf{not} in the bundle of agent $j$ at the start of round $s$. Since $g_{t'} \in A^*_j$, then the reassignment of $g_{t'}$ in the process results in a welfare loss of $\delta_{t''}$ where $t'' < t'$, and accordingly, the reassignment of $g_{t'}$ happened before round $s$.
    Note that at the end of round $s-1$, at most $t'-1$ items were reassigned from bundle $A^*_j$. Since $g_{t'}$ was reassigned before round $s$, then 
    at least one item from $\{ g_1, g_2,\ldots, g_{t'-1} \}$ is in the bundle of agent $j$ at the start of round $s$. Reassigning any item from $\{ g_1, g_2,\ldots, g_{t'-1} \}$ to agent $n$ induces a utility loss at most $l^n(g_{t'})$; note that $l^n(g_r) \leq l^n(g_{r+1}) $ for all $r \leq |A^*_j| - k$.
    Thus, at the start of round $s$, agent $j$ is unsatisfied and their bundle contains some item from $\{ g_1, g_2,\ldots, g_{t'-1} \}$.
    Therefore, at the start of round $s$, there exists some item $g$ in the bundle of some agent $j$ (an unsatisfied agent), and some agent $n$ (an active agent), such that reassigning $g$ to agent $n$ causes welfare loss at most $l^n(g_{t'}) < \delta_{t'}$, a contradiction.   
\end{proof}
The above claim implies that in the greedy reassignment procedure, for any $j\in S$, the welfare loss of reassigning items in 
$A^*_j$ is at most $\sum_{g \in L^n_j} l_j^n(g)$. Then, by summing up over all $j\in S$, we have
\begin{align*}
\USW(\cA^*) - \USW(\cA^k) &\leq \sum_{ j\in S} \sum_{g \in L^n_j} l_j^n(g)\\
&\leq \sum_{i\in S}\frac{|A^*_i|-k}{|A^*_i|}(u_i(A^*_i)-u_{n}(A^*_i)).
\end{align*}
Therefore, our procedure outputs a cardinal allocation $\cA^k$ with utilitarian welfare
\begin{align*}
\USW(\cA^k) &\geq \sum_{i\in R\cup T}u_i(A^*_i)+\sum_{i\in S}\frac{k}{|A^*_i|}u_i(A^*_i)\\
&\quad +\sum_{i\in S}\frac{|A^*_i|-k}{|A^*_i|}u_{n}(A^*_i),
\end{align*}
    where $n\in R$.

\subsection{Proof of Lemma~\ref{lem:1catuppercase2} (Cont.)}
    \textbf{Case 2:} $\sum_{i\in S}|A^*_i|=m$.  In the remaining case where only agents in $S$ are allocated items under $\cA^*$, we have
     \begin{align*}
         \frac{\OPTUSW(I)}{\maxUSW(\mathcal{A})}&\leq \frac{\sum_{i\in S}u_i(A^*_i)}{1+\sum_{i\in S}\frac{k}{|A^*_i|}(u_i(A^*_i)-u_{i^\dagger}(A^*_i))}\\
         &=\frac{1+\sum_{i\in S}(u_i(A^*_i)-u_{i^\dagger}(A^*_i))}{1+\sum_{i\in S}\frac{k}{|A^*_i|}(u_i(A^*_i)-u_{i^\dagger}(A^*_i))}\\
         &=\frac{1+\sum_{i\in S}v_i(A^*_i)}{1+\sum_{i\in S}\frac{k}{|A^*_i|}v_i(A^*_i)},
     \end{align*}
     where $v_i(A^*_i)=u_i(A^*_i)-u_{i^\dagger}(A^*_i)$ for all $i\in S$.
     Similar to the proof of Lemma \ref{lem:fix}, we present the upper bound of the last expression through derivatives. Define $F(v_1(A^*_1),\ldots,v_{|S|}(A^*_{|S|})) = \frac{1+\sum_{i\in S}v_i(A^*_i)}{1+\sum_{i\in S}\frac{k}{|A^*_i|}v_i(A^*_i)}$ with domain $v_i(A^*_i) \in [0,1]$ for all $i$ and $\sum_{i\in S}v_i(A^*_i)\leq |S|-1$.

     Now for any $i'\in S$, the derivative of $F$ with respect to $v_{i'}(A^*_{i'})$ is
     $$
     \frac{\partial F}{\partial v_{i'}(A^*_{i'})}=\frac{{1-\frac{k}{|A^*_{i'}|}+k\sum_{i\in S}v_i(A^*_i)\left(\frac{1}{|A^*_i|}-\frac{1}{|A^*_{i'}|}\right)}}{{(1+\sum_{i\in S}\frac{k}{|A^*_i|}v_i(A^*_i))^2}}.
     $$
     Again, as in the proof of Lemma \ref{lem:fix}, the numerator is independent of $v_{i'}(A^*_{i'})$ and therefore the derivative is either negative, positive, or zero.
     Hence, $F$ is maximized when each $v_i(A^*_i)$ is either $0$ or is as large as possible, under the constraints of $\sum_{i\in S}v_i(A^*_i)\leq |S|-1$ and $u_i(A^*_i) \leq 1$. Let $E\subseteq S$ denote the set of indices such that $v_i(A^*_i)=0$ in the solution maximizing $F$. We split the proof into three further subcases based on whether or not $E=\emptyset$, and if $E=\emptyset$, whether or not each bundle of items belonging to the agents of $S$ under $\cA^*$ has the same number of items.
    
    If $E$ is empty \textbf{and} for every $i,j\in S$, $|A^*_i|=|A^*_j| = \frac{m}{|S|}$, then we have
    \begin{align*}
        \frac{\OPTUSW(I)}{\maxUSW(\mathcal{A})}&
        \leq  \max\limits_{\substack{0\leq v_i(A^*_i)\leq 1 \\ \sum_{i=1}^{|S|}v_i(A^*_i) \leq |S|-1}}F(\cdot)\\
        &\leq \frac{|S|}{1+k\frac{|S|}{m}(|S|-1)}\\
        &\leq \frac{\sqrt{\frac{m}{k}}}{2-\sqrt{\frac{k}{m}}}, 
    \end{align*}
    where the second inequality is a result of $|A^*_i|=|A^*_j|$ for all $i,j\in S$, as well as the requirement that $\sum_{i\in S}v_i(A^*_i)=|S|-1$ in order to maximize $F$.
    For the third inequality, some basic calculus shows that the expression is maximized when $|S| = \sqrt{\frac{m}{k}}$.
    It remains to show that for any $m>k\geq 1$, the following holds:
    $$
    \frac{\sqrt{\frac{m}{k}}}{2-\sqrt{\frac{k}{m}}} < \frac{1}{2}\left( \sqrt{1+\frac{m-1}{k}}+1 \right).
    $$
    Setting $r=\frac{m}{k}$, we have
    \begin{align*}
        \frac{1}{2}\Biggl( \sqrt{1+\frac{m-1}{k}}&+1 \Biggr)-\frac{\sqrt{\frac{m}{k}}}{2-\sqrt{\frac{k}{m}}}\\
        &=\frac{1}{2}\left( \sqrt{1+r-\frac{1}{k}}+1 \right)-\frac{\sqrt{r}}{2-\frac{1}{\sqrt{r}}}\\
        &\geq \frac{\sqrt{r}+1}{2}-\frac{r}{2\sqrt{r}-1}\\
        &=\frac{\sqrt{r}+1}{2}-\frac{r-\frac{\sqrt{r}}{2}}{2\sqrt{r}-1}-\frac{\sqrt{r}}{4\sqrt{r}-2}\\
        &=\frac{1}{2}-\frac{\sqrt{r}}{4\sqrt{r}-2}>0
    \end{align*}
    for all $r>1$. This gives us
    \begin{align*}
        \frac{\OPTUSW(I)}{\maxUSW(\mathcal{A})}&\leq \frac{\sqrt{\frac{m}{k}}}{2-\sqrt{\frac{k}{m}}}\\
        &<\frac{1}{2}\left( \sqrt{1+\frac{m-1}{k}}+1 \right),
    \end{align*}
    concluding the proof of this subcase.
    
    If $E$ is empty \textbf{and} $|A_p|\neq |A_q|$ for some $p,q\in S$, then suppose without loss of generality that $S=[|S|]$ and $|A_1|\geq |A_2|\geq \dots \geq |A_{|S|}|$.
    Recall that $\sum_{i=1}^{|S|}v_i(A^*_i) = |S|-1$ is a necessary condition for maximizing $F$.
    Then by the rearrangement inequality, $k\sum_{i\in S}\frac{v_i(A^*_i)}{|A^*_i|}$ is minimized when $v_i(A^*_i)=1$ for every $i\in [|S|-1]$, and $v_{|S|}(A^*_{|S|})=0$. 
    Since $|A_{|S|}|\geq k+1$, we have $\sum_{i=1}^{|S|-1}|A_i|\leq m-k-1$, so therefore
    \begin{align*}
        \frac{\OPTUSW(I)}{\maxUSW(\mathcal{A})}&\leq  \max\limits_{\substack{0\leq v_i(A^*_i)\leq 1 \\ \sum_{i=1}^{|S|}v_i(A^*_i) \leq |S|-1}}F(\cdot)\\
        &\leq 
        \frac{|S|}{1+\sum_{i\in [|S|-1]}\frac{k}{|A_i|}}\\
        &\leq
        \frac{|S|}{1+k\frac{(|S|-1)^2}{m-k-1}},
    \end{align*}
    where the last inequality transition is due to the AM-HM inequality.
    By computing the derivative of the last expression above with respect to $|S|$, one can verify that the expression is maximized when $|S|=\sqrt{\frac{m-1}{k}}$, and thus, the expression is at most
    
    $$\begin{aligned}
    &\frac{(m-k-1)\sqrt{\frac{m-1}{k}}}{m-k-1+k(\sqrt{\frac{m-1}{k}}-1)^2}\\
    &=\frac{(m-1)\left(\sqrt{\frac{m-1}{k}}+1\right)-\sqrt{k(m-1)}\left(\sqrt{\frac{m-1}{k}}+1\right)}{2(m-1)-2\sqrt{k(m-1)}}\\
        &=\frac{1}{2}\left(\sqrt{\frac{m-1}{k}}+1\right)< \frac{1}{2}\left(\sqrt{1+\frac{m-1}{k}}+1\right).\\
           \end{aligned}
    $$
    We have now completed the proof for the subcase where $E$ is empty. 
    
    If $E$ is non-empty, we define $S'\coloneqq S\setminus E$ and accordingly, we have
    \begin{align*}
   \frac{1+\sum_{i\in S}v_i(A^*_i)}{1+\sum_{i\in S}\frac{k}{|A^*_i|}v_i(A^*_i)}
   &\leq \frac{1+|S|-|E|}{1+\sum_{i\in S\setminus E}\frac{k}{|A^*_i|}}\\
   &= \frac{1+|S'|}{1+\sum_{i\in S'}\frac{k}{|A^*_i|}}.
    \end{align*}
     Since each agent $i\in E$ also belongs to $S$, we have $\sum_{i\in S'}|A_i^*| \leq m-(k+1) < m -1 $.
     Then by the AM-HM inequality, we have
     $$
   \sum_{i\in S'} \frac{1}{|A_i^*|} \geq \frac{|S'|^2}{\sum_{i\in S'} |A^*_i|} > \frac{|S'|^2}{m-1},
     $$
    implying 
    \begin{align*}
    \frac{\OPTUSW(I)}{\maxUSW(\mathcal{A})} &\leq \frac{1+|S'|}{1+k\frac{|S'|^2}{m-1}} \\
    &< \frac{1}{2}\left( \sqrt{1+\frac{m-1}{k}} + 1 \right),
    \end{align*}
    where the last inequality has been shown in the proof of Lemma \ref{lem:fix}.


\section{Omitted proofs for Section~\ref{sec::multi-cate}}
\subsection{Proof of Lemma~\ref{lem:red}}
    We prove this lemma statement via the following Lemmas~\ref{lem:red1}, \ref{lem:red2}, and \ref{lem:red3}.
\begin{lemma}\label{lem:red1}
Given an instance $I$ where some agent $i$ exceeds the cardinality constraint in more than one category under the utilitarian-optimal allocation $\cA^*$, there exists another instance $I'$ where under $\cA'^{*}$, agent $i$ exceeds the cardinality constraint in only one category, and $\frac{\OPTUSW(I)}{\maxUSW(\mathcal{A})}\leq \frac{\OPTUSW(I')}{\maxUSWa(\mathcal{A})}$ holds.
\end{lemma}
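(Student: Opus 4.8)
\textbf{Proof plan for Lemma~\ref{lem:red1}.}
The plan is to show that we can always ``zero out'' the utility an over-constrained agent places on one of the offending categories, pushing all of that agent's utility onto a single category where she still violates the constraint, without decreasing the price-of-cardinality ratio. Concretely, suppose under $\cA^*$ agent $i$ exceeds the cardinality constraint in categories $j_1,\dots,j_p$ with $p\ge 2$. Pick the category, say $j_1$, from which agent $i$ derives the \emph{least} utility among $j_1,\dots,j_p$ (ties broken arbitrarily). I would construct $I'$ by redistributing: set $u'_i$ so that agent $i$'s utility for every item of $C_{j_1}$ is $0$, and scale up her utilities on $C_{j_2},\dots,C_{j_p}$ (or just on one of them) so that she remains normalized, in a way that preserves $i$'s ranking of items within each untouched category. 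All other agents' utilities, and agent $i$'s utilities on all other categories, are unchanged. The first thing to check is that $\cA^*$ is still utilitarian-optimal for $I'$ (or can be taken to be), and that under it agent $i$ now exceeds the constraint in strictly fewer categories --- she no longer gains anything from $C_{j_1}$, so in the optimal allocation of $I'$ there is no reason to give her more than $k_{j_1}$ items there, while her behaviour on the remaining violated categories is untouched (or intensified).

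The main work is the inequality $\frac{\OPTUSW(I)}{\maxUSW(\mathcal{A})}\le \frac{\OPTUSW(I')}{\maxUSWa(\mathcal{A})}$. For the numerator, I would argue $\OPTUSW(I)=\OPTUSW(I')$: the optimal allocation assigns each item to an agent valuing it most, and our modification only moves agent $i$'s utility mass around among categories where she is already the top valuer of the items she holds, so the optimal welfare is unchanged (this is where the tie-breaking / ranking-preservation in the construction matters). For the denominator, I would show $\maxUSWa(\mathcal{A})\le \maxUSW(\mathcal{A})$: take any cardinal allocation $\cA$ optimal for $I'$; because agent $i$'s $I'$-utility on $C_{j_1}$ is zero and otherwise bounded above by a suitably reweighted version of her $I$-utility, one can exhibit a cardinal allocation of $I$ with at least the same utilitarian welfare --- essentially because every item still goes to someone who values it in $I$ at least as much as the corresponding $I'$ value. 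Combining $\OPTUSW(I)=\OPTUSW(I')$ with $\maxUSWa(\mathcal{A})\le\maxUSW(\mathcal{A})$ gives the claimed inequality.

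The delicate point --- and the one I expect to be the main obstacle --- is making the reweighting precise so that \emph{both} the ``optimal welfare unchanged'' and ``cardinal welfare does not increase'' claims hold simultaneously. Naively rescaling agent $i$'s utilities on $C_{j_2}$ to restore normalization could create a new item on which agent $i$ becomes the unique top valuer where she was not before, which would raise $\OPTUSW(I')$ above $\OPTUSW(I)$ and break the numerator equality; conversely, being too conservative might fail to keep $I'$ normalized. The fix is to add the removed mass $\sum_{g\in A^*_{i j_1}}u_i(g)$ (equivalently $u_i(M)$'s shortfall) onto items of $C_{j_2}$ that agent $i$ \emph{already receives} under $\cA^*$ and already values strictly positively --- such items exist since $i$ violates the constraint in $C_{j_2}$, so $|A^*_{i j_2}|>k_{j_2}\ge 1$ --- and since $\cA^*$ gives those items to $i$, she was already (weakly) a top valuer of them, so her being a top valuer in $I'$ changes nothing about $\OPTUSW$. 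Once this construction is pinned down, both inequalities follow routinely. Iterating this single-category-elimination step reduces the number of violated categories for agent $i$ down to one, yielding $I'$ as claimed.
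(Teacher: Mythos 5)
Your overall strategy --- concentrate the over-constrained agent's utility into a single violated category and argue that the ratio weakly increases --- is the same as the paper's, but the rule you use to decide \emph{which} category survives is wrong, and that rule is exactly what makes the denominator inequality go the right way. You eliminate the violated category from which agent $i$ derives the \emph{least utility} and push that mass onto the others. The paper instead keeps the category $p\in\arg\min_j k_j/|A^*_{ij}|$ (the one in which the constraint forces the agent to discard the largest \emph{fraction} of her bundle) and zeroes out all the other violated categories. The relevant quantity is the preservation ratio $k_j/|A^*_{ij}|$, not the utility level: moving a unit of agent $i$'s utility mass from category $j$ to category $j'$ changes the best cardinal welfare by roughly $\bigl(k_{j'}/|A^*_{ij'}|-k_j/|A^*_{ij}|\bigr)$ times that unit, so to guarantee $\maxUSWa(\mathcal{A})\le\maxUSW(\mathcal{A})$ you must move mass \emph{toward} the smallest ratio. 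Under your rule the move can go the wrong way. Concretely, let agent $i$ hold all of $C_{j_1}$ ($m_{j_1}=10$, $k_{j_1}=5$, total value $0.1$ spread evenly, ratio $1/2$) and all of $C_{j_2}$ ($m_{j_2}=10$, $k_{j_2}=9$, total value $0.9$ spread evenly, ratio $9/10$), with the other agent valuing these items at $0$. Your construction deletes the value on $C_{j_1}$ and moves $0.1$ onto $C_{j_2}$, raising agent $i$'s best cardinal utility from $0.05+0.81=0.86$ to $0.9$; since $\OPTUSW$ is unchanged, the price-of-cardinality ratio strictly \emph{decreases}, contradicting the inequality the lemma requires.

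Two smaller points. First, you reweight only agent $i$'s utilities; the paper also zeroes out the \emph{other} agent's utility on the eliminated items and adds that mass to the surviving items, which is what lets it argue (Claim~\ref{claim:multi-ak1}) that the split of $A^*_{ip}$ between the two agents is identical in the cardinal optima of $I$ and $I'$, and then do the welfare accounting set by set. Second, the ``delicate point'' you flag --- that rescaling might create a new top valuer and raise $\OPTUSW(I')$ --- is indeed resolved by adding mass only to items agent $i$ already receives under $\cA^*$, exactly as you propose and as the paper does; the genuinely delicate point is the one above, which your writeup dismisses as ``follows routinely'' but which fails outright without the correct choice of surviving category.
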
 
\begin{proof}
For simplicity, we assume that only agent $1$ exceeds the cardinality constraint in the category subset $H$, where $|H| \geq 2$. Later on, we will explain how to extend the following proof to the case where both agents exceed the cardinality constraint in at least two categories each.
Denote by $\cA^k$ the utilitarian-optimal cardinal allocation and by $A^k_{ij}$ the bundle agent $i$ receives from $C_j$. 

Consider category $p\in \arg\min_j{\frac{k_j}{|A^*_{1j}|}}$, breaking ties arbitrarily. We consider another instance $I'$ that only differs from $I$ by the utility functions. 
In instance $I'$, each agent $i$'s utility function $u'_i$ is described as follows,
\begin{itemize}
    \item for any $g \notin \bigcup_{j\in H}A^*_{1j}$, $u'_i(g) = u_i(g)$;
    \item for any $g \in \bigcup_{j\in H\setminus \{p\}}A^*_{1j}$, $u'_i(g)=0$;
    \item for any $g \in A^*_{1p}$, $u'_i(g)=u_i(g)+\frac{\sum_{j\in H\setminus\{p\} }u_i(A^*_{1j})}{|A^*_{1p}|}$.
\end{itemize}
In other words, we merge each agent's utilities for all of the items which agent $1$ receives in the category $H$ into the items which agent $1$ receives in category $p$ such that the marginal increase of item utilities is identical.

Let $\cB^*$ (resp. $\cB^k$) be the utilitarian-optimal (resp. utilitarian-optimal cardinal) allocation of $I'$. 
We have $\USW(\cB^*)=\USW(\cA^*)$ and therefore, in order to prove that $\frac{\OPTUSW(I)}{\maxUSW(\mathcal{A})}\leq \frac{\OPTUSW(I')}{\maxUSWa(\mathcal{A})}$, it suffices to show that $\USW(\cB^k)\leq\USW(\cA^k)$.
For $\cA^k$ and $\cB^k$, if $j\notin H$, the total utility from the assignments of $C_j$ in these two allocations is equal. Then it suffices to prove that the total utility from the assignment of $\{C_j\}_{j\in H}$ in $\cB^k$ is at most that in $\cA^k$. 
For categories $\{C_j\}_{j\in H}$, the total utility from the goods $\bigcup_{j\in H} A^*_{2j}$ is identical between $\cA^k$ and $\cB^k$ as the agents' utilities are unchanged for these items in $I'$, compared to $I$. Accordingly, we next prove that the utility from $\bigcup_{j\in H} A^*_{1j}$ in $\cA^k$ is at least that in $\cB^k$.
By the construction of $u'$, only items in $A^*_{1p}$ (among $\bigcup_{j\in H} A^*_{1j}$) can result in non-zero utilities in $I'$.
Next we claim that the assignment of items in $ A^*_{1p}$ is identical in $\cA^k$ and $\cB^k$.

\begin{claim}\label{claim:multi-ak1}
    For any $i\in [2]$, $A^k_{ip} \cap A^*_{1p} = B^k_{ip} \cap A^*_{1p}$.
\end{claim}
\begin{proof}[Proof of Claim \ref{claim:multi-ak1}]
    It suffices to prove $A^k_{1p} \cap A^*_{1p} = B^k_{1p} \cap A^*_{1p}$ as we have already shown that the allocations of $A^*_{2p}$ are identical in $\cA^k$ and $\cB^k$. Furthermore, we know that $A^k_{1p}, B^k_{1p}\subseteq A^*_{1p}$, and hence, it suffices to prove $A^k_{1p} = B^k_{1p}$.

    First, as $u_1(A^*_{1j}) > u_2(A^*_{1j})$ for each $j\in H\setminus\{p\}$ and $u_1(g) \geq u_2(g) $ for every $g\in A^*_{1p}$, we have $u'_1(g) > u'_2(g)$ for all $g\in A^*_{1p}$.
    We then have $B^*_{1p} = A^*_{1p}$. Now let $S$ denote the set of items moved from agent $1$ to agent $2$, i.e., $S\coloneqq A^*_{1p}\setminus A^k_{1p}$.
    Then $S$ contains the $|A^*_{1p}|-k$ items whose reassignments cause the minimum utility loss under $u_i$'s. Note that by the construction of $u'$, we see that starting from $\cB^*$ and moving the items of $S$ from agent $1$ to agent $2$ also causes the minimum utility loss. Therefore, $B^k_{1p}=B^*_{1p}\setminus S = A^*_{1p}\setminus S = A^k_{1p}$, completing the proof.
\end{proof}

We now upper bound the total utility of the items in $A^*_{1p}$ for allocation $\cB^k$ as follows,
$$
\begin{aligned}
  &\sum_{i=1}^2  u'_i(B^k_{ip}\cap A^*_{1p})\\
  & = \sum_{i=1}^2u_i(A^k_{ip}\cap A^*_{1p}) \\
  &\quad + \sum_{j\in H\setminus\{p\}} \left( \frac{k_p}{|A^*_{1p}|}\cdot u_1(A^*_{1j})  + \frac{|A^*_{1p}|-k_p}{|A^*_{1p}|}\cdot u_2(A^*_{1j}) \right)  \\
  & = \sum_{i=1}^2u_i(A^k_{ip}\cap A^*_{1p})+ \sum_{ j \in H\setminus \{ p \} } u_2(A^*_{1j}) \\
  &\qquad + \frac{k_p}{|A^*_{1p}|}\sum_{j\in H\setminus\{p\}} (u_1(A^*_{1j}) - u_2(A^*_{1j}) )  \\
  & \leq \sum_{i=1}^2u_i(A^k_{ip}\cap A^*_{1p})+ \sum_{ j \in H\setminus \{ p \} } u_2(A^*_{1j}) \\
  &\qquad + \sum_{j\in H\setminus\{p\}} \frac{k_j}{|A^*_{1j}|}(u_1(A^*_{1j}) - u_2(A^*_{1j}) ) ,
\end{aligned}
$$
where the first equality is because $|A^k_{1p} \cap A^*_{1p}| = k_p$ and $|A^k_{2p}\cap A^*_{1p}| = |A^*_{1p}| - k$, and the inequality is due to $\frac{k_p}{|A^*_{1p}|} \leq \frac{k_j}{|A^*_{1j}|}$ for every $j \in H$.
By using similar arguments to that in the proof of Lemma \ref{lem:conscard}, one can verify that the last expression above is no greater than the utility from the assignment of $\bigcup_{j\in H}A^*_{1j}$ in allocation $\cA^k$.
Note that the first summation term is equal to the total utility from $A^*_{1p}$, and that the fact that the second summation term is at most $\sum_{j\in H\setminus\{p\}}\sum_{i=1}^2u_i(A^k_{ij}\cap A^*_{1j})$ can be checked by running the reassignment procedure in the proof of Lemma \ref{lem:conscard} for every $A^*_{1p}$ with $j\in H\setminus\{p \}$.
According to the construction of $I'$, with the exception of $\bigcup_{j\in H}A^*_{1j}$, the allocations of the other items in $\cA^k$ and $\cB^k$ are identical. As a consequence, we have $\USW(\cB^k) \leq \USW(\cA^k)$, and thus, the price of cardinality ratio of $I$ is no greater than that of $I'$.

For the case where both agents exceed the constraint in at least two categories, we can then repeat this process for agent $2$; note that the allocations of items in the category where agent $2$ exceeds the cardinality constraint are identical in $\cA^*$ (resp. $\cA^k$) and $\cB^*$ (resp. $\cB^k$) so that one can start from $I'$ and construct another $I''$ to merge the utility of items of agent $2$.
\end{proof}

\begin{lemma} \label{lem:red2}
Given an instance $I$ where under $\cA^{*}$, both agents exceed the cardinality constraint in at most one category each, there exists another instance $I'$ where under $\cA'^{*}$, if an agent exceeds the cardinality constraint in some category $j$, then she receives zero utility from every other category $[h]\setminus \{j\}$, and $\frac{\OPTUSW(I)}{\maxUSW(\mathcal{A})}\leq \frac{\OPTUSW(I')}{\maxUSWa(\mathcal{A})}$ holds.
\end{lemma}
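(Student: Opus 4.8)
The plan is to install the required structure for the two agents one after another, reusing the utility-transfer device behind Lemma~\ref{lem:red1}. Fix the utilitarian-optimal allocation $\cA^*$ of $I$. Because $m_j/k_j\le 2$ for every $j$ forces the (at most one) category in which agent $1$ exceeds the constraint to differ from the one in which agent $2$ exceeds it, we may call them $j_1$ and $j_2$ respectively, with $j_1\ne j_2$ (if an agent exceeds nowhere there is nothing to prove for her, so assume both exist). To treat agent $1$, I would construct $I'$ from $I$ exactly as in Lemma~\ref{lem:red1}, taking the \emph{single} category $j_1$ as the kept category $p$ and letting \emph{every other} category in which agent $1$ holds items under $\cA^*$ play the role of $H\setminus\{p\}$: for each $g$ that agent $1$ holds in a category $j\ne j_1$ set $u'_i(g)=0$ for both agents, for each $g\in A^*_{1j_1}$ add the uniform amount $\sum_{j\ne j_1}u_i(A^*_{1j})/|A^*_{1j_1}|$ to every agent's utility, and leave all other utilities unchanged. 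I would then run the analogous step for agent $2$ into the category $p_2\in\arg\min\{k_j/|A^*_{2j}| : |A^*_{2j}|>k_j\}$, funnelling \emph{all} of agent $2$'s value into $C_{p_2}$.

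Three things must be checked, all by arguments already present in the proofs of Lemmas~\ref{lem:red1} and~\ref{lem:conscard}. First, $\OPTUSW(I')=\OPTUSW(I)$: the items whose utilities are zeroed were held by agent $1$, so $u_1\ge u_2$ on them and reassigning them to agent $2$ loses nothing, while the uniform boost keeps agent $1$'s realized value equal to $u_1(A^*_1)$ but now drawn entirely from $C_{j_1}$; hence $\cA^*$, with ties broken so the zeroed items go to agent $2$, remains utilitarian-optimal in $I'$. Second, $\maxUSWa(\cA)\le\maxUSW(\cA)$, shown by the same greedy-reassignment bound as in Lemma~\ref{lem:conscard}, whose decisive ingredient is the chain $k_{j_1}/|A^*_{1j_1}|<1\le k_j/|A^*_{1j}|$ holding for every other occupied category $j$; this is precisely where the hypothesis ``agent $1$ exceeds only in $C_{j_1}$'' is used, and the analogous chain for agent $2$ relies on the minimising choice of $p_2$ over agent $2$'s exceeded categories together with $k_j/|A^*_{2j}|\ge 1$ for the rest. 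Together these give $\frac{\OPTUSW(I)}{\maxUSW(\cA)}\le\frac{\OPTUSW(I')}{\maxUSWa(\cA)}$. Third, the step for agent $2$ only rescales utilities on items agent $2$ holds (or on $A^*_{2p_2}$) and never touches $A^*_{1j_1}$, so agent $1$'s realized value stays concentrated in $C_{j_1}$ throughout.

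The delicate point — and what I expect to be the main obstacle — is that when agent $1$'s items in a category $C_j$ with $m_j>k_j$ are released to agent $2$, agent $2$ may end up holding all $m_j$ of them and hence exceed the constraint in $C_j$ \emph{in addition to} $C_{j_2}$, which would contradict the conclusion. Two observations handle this. (a) When agent $2$ is subsequently processed, all of her value (including whatever she picked up in such a $C_j$) is funnelled into $C_{p_2}$, so after both steps the \emph{only} categories carrying positive value for anyone are $C_{j_1}$ (for agent $1$) and $C_{p_2}$ (for agent $2$), with $p_2\ne j_1$. (b) Every other category $C_j$ is now fully worthless — both agents value it $0$ — so, using the freedom to choose $\cA^*$ among utilitarian-optimal allocations, I would redistribute its $m_j\le 2k_j$ items so that neither agent exceeds the constraint in $C_j$. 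At that point each agent exceeds the constraint in exactly one category and has zero value in every other category, and the ratio inequality established above is preserved. Making precise the choice of optimal allocation (i.e.\ the tie-breaking) after each transfer, and verifying that the two agents' transfers do not interfere beyond what (a) and (b) account for, is the part of the argument requiring the most care; the individual transfer steps themselves are routine given Lemmas~\ref{lem:red1} and~\ref{lem:conscard}.
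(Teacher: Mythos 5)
Your proposal is correct and follows essentially the same route as the paper's proof: zero out each agent's utility on the items she holds in categories where she does not exceed the constraint, funnel that value uniformly onto her bundle in the exceeded category, and observe that $\OPTUSW$ is preserved while the cardinal optimum weakly decreases because only a fraction $k_{j_1}/|A^*_{1j_1}|<1$ of the transferred value survives the constraint (versus all of it before, since $|A^*_{1j}|\le k_j$ elsewhere). The only cosmetic differences are that the paper performs the merge one category at a time rather than all at once, and it dispatches your ``delicate point'' even more directly by noting that $\cA^*$ itself remains utilitarian-optimal in $I'$, so the exceedance structure never changes.
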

\begin{proof}
Suppose without loss of generality that in $\cA^*$, agent $1$ exceeds the cardinality constraint of $C_1$ and also receives non-zero utility from $\{C_2, C_3,\ldots\}$.
By merging each agent's utility for items $A^*_{12}$, we construct an instance $I'$ that only differs from $I$ in the utility functions. Specifically, the utility function $u'_i$ of agent $i$ in $I'$ is described as follows;
\begin{itemize}
    \item for any $g\in A^*_{12}$, $u'_i(g)=0$;
    \item for any $g\in A^*_{11}$, $u'_i(g)=u_i(g)+\frac{1}{|A^*_{11}|}\cdot u_i(A^*_{12})$;
    \item for any remaining $g$, $u'_i(g)=u_i(g)$.
\end{itemize}
Let $\cB^*$ and $\cB^k$ denote the utilitarian-optimal allocation and the utilitarian-optimal cardinal allocation, respectively.
By comparing the utility profiles between $I$ and $I'$, we see that $\USW(\cA^*) = \USW(\cB^*)$, and thus it suffices to prove $\USW(\cA^k) \geq \USW(\cB^k)$. By arguments similar to that in the proof of Claim \ref{claim:multi-ak1}, one can verify that in allocations $\cA^k$ and $\cB^k$, the assignments of items $A^*_{11}$ are identical.
We now consider the total utility from items in $A^*_{11} $ in allocation $\cB^k$,
$$
\begin{aligned}
\sum_{i=1}^2 u'_i(B^k_{i1}\cap A^*_{11}) & = \sum_{i=1}^2 u_i(B^k_{i1}\cap A^*_{11}) + \frac{k_1}{|A^*_{11}|}u_1(A^*_{12})\\
&\quad + \frac{|A^*_{11}| - k_1}{|A^*_{11}|}u_2(A^*_{12}) \\
& = \sum_{i=1}^2 u_i(B^k_{i1}\cap A^*_{11})+ u_2(A^*_{12})\\
&\quad + \frac{k_1}{|A^*_{11}|} \left( u_1(A^*_{12}) - u_2(A^*_{12}) \right) .
\end{aligned}
$$
We now show that the last expression above is no greater than the utility from items $A^*_{11}\cup A^*_{12}$ in $\cA^k$.
The first summation term is equal to $u_1(A^*_{11})$, which is the utility from items $A^*_{11}$ in $\cA^k$. The sum of the last two terms is no greater than $u_1(A^*_{12})$ 
as $k_1<|A^*_{11}|$ and $u_1(A^*_{12}) \geq u_2(A^*_{12})$.
By the construction of $I'$, the assignments of items in $\cA^k$ and $\cB^k$ are identical, with the exception of $A^*_{11}\cup A^*_{12}$. Thus, we have $\USW(\cB^k) \leq \USW(\cA^k)$ and therefore $\frac{\OPTUSW(I)}{\maxUSW(\mathcal{A})}\leq \frac{\OPTUSW(I')}{\maxUSWa(\mathcal{A})}$.
We can repeat this process for both agents, and for each of the categories where an agent does not exceed the cardinality constraint but receives non-zero utility in $\cA^*$, concluding the proof.
\end{proof}

\begin{lemma}\label{lem:red3}
Given an instance $I$ where under $\cA^{*}$, at most one agent exceeds the cardinality constraint in some category, there exists another instance $I'$ where under $\cA'^{*}$, both agents exceed the cardinality constraint in one category each, and these categories are different. Also, $\frac{\OPTUSW(I)}{\maxUSW(\mathcal{A})}\leq \frac{\OPTUSW(I')}{\maxUSWa(\mathcal{A})}$ holds.
\end{lemma}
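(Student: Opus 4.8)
The plan is to construct $I'$ explicitly by transplanting a fresh cardinality violation onto the agent who currently respects every cap, and then to verify the ratio inequality with the same welfare-accounting template used in Lemmas~\ref{lem:red1} and~\ref{lem:conscard}. First dispose of the case where $\cA^*$ is cardinal: then $\frac{\OPTUSW(I)}{\maxUSW(\mathcal{A})}=1$, so it suffices to produce \emph{any} instance whose ratio is at least $1$ and in which both agents exceed their caps in distinct categories, and the lower-bound instance of Theorem~\ref{thm:multi-n=2} --- agent $1$ values each item of $C_1$ at $1/m_1$ and agent $2$ values each item of $C_2$ at $1/m_2$ --- has ratio $\frac{2}{k_1/m_1+k_2/m_2}\ge 1$ with both agents exceeding, so it works.

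For the remaining case, by Lemma~\ref{lem:red1} we may assume that under $\cA^*$ exactly one agent, say agent $1$, exceeds, and in exactly one category $j_1$, while agent $2$ exceeds nowhere. Fix a second category $j_2\ne j_1$ with an active cap, $m_{j_2}>k_{j_2}$ (if no other category is constrained, the instance reduces to a single-category one and Lemma~\ref{lem:red3} is not needed, so assume $j_2$ exists); since agent $2$ stays within her cap in $j_2$, this forces $|A^*_{1j_2}|\ge m_{j_2}-k_{j_2}\ge 1$. Now let $I'$ differ from $I$ only in the utilities: put $u'_1(g)=u'_2(g)=0$ for each $g\in A^*_{1j_2}$, put $u'_1(g)=u_1(g)+\frac{u_1(A^*_{1j_2})}{|A^*_{1j_1}|}$ for each $g\in A^*_{1j_1}$, and keep every other value. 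A short check shows that the utilitarian-optimal allocation $\cB^*$ of $I'$ coincides with $\cA^*$ on every category except $C_{j_2}$, where the items of $A^*_{1j_2}$ are now worth $0$ to both agents and ties may be broken so that agent $2$ receives all $m_{j_2}>k_{j_2}$ of them. Hence under $\cB^*$ agent $1$ exceeds only in $j_1$, agent $2$ exceeds only in $j_2$, and $j_1\ne j_2$ --- the first conclusion; moreover agent $1$'s total utility is preserved and agent $2$'s is weakly decreased, and summing $\max_i u'_i(g)$ over $g$ gives $\OPTUSW(I')=\OPTUSW(I)$.

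It remains to prove $\maxUSWa(\mathcal{A})\le\maxUSW(\mathcal{A})$, which then yields $\frac{\OPTUSW(I)}{\maxUSW(\mathcal{A})}\le\frac{\OPTUSW(I')}{\maxUSWa(\mathcal{A})}$. Zeroing agent $2$'s utility on $A^*_{1j_2}$ only weakens cardinal allocations, while the change to agent $1$ merely relocates $u_1(A^*_{1j_2})$ worth of her utility from $C_{j_2}$, where she could keep all her items, into $C_{j_1}$, where she must shed some; as $\frac{k_{j_1}}{|A^*_{1j_1}|}<1\le\frac{k_{j_2}}{|A^*_{1j_2}|}$, this concentration is weakly worse for cardinal welfare. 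I would make this precise exactly as in Lemma~\ref{lem:red1}: first show (as in Claim~\ref{claim:multi-ak1}) that the assignment of the $A^*_{1j_1}$ items agrees in the cardinal optima of $I$ and $I'$, then bound the cardinal welfare of $I'$ via the greedy reassignment procedure of Lemma~\ref{lem:conscard}, invoking the inequality $\frac{k_{j_1}}{|A^*_{1j_1}|}\le\frac{k_{j_2}}{|A^*_{1j_2}|}$.

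The step I expect to be the main obstacle is precisely this last comparison: unlike in Lemma~\ref{lem:red1}, where utility is merged within one family of categories, here the source $C_{j_2}$ and the target $C_{j_1}$ carry different caps, so the reassignment accounting of Lemma~\ref{lem:conscard} must be run on both categories, and the cap inequality above is exactly what closes it. A secondary point --- ruling out a spurious violation in the categories $j\notin\{j_1,j_2\}$ --- is immediate, since no agent's utility changes there, so the allocation and each agent's load on those categories are untouched.
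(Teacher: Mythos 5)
Your construction is sound, but it takes a genuinely different route from the paper. The paper first invokes Lemma~\ref{lem:red2} so that agent $1$ has \emph{zero} utility on every category other than $j_1$, and then applies an $\epsilon$-perturbation to agent $2$: her utility on some positively-valued item is decreased by $\epsilon$ and spread over the (now worthless to agent $1$) items of $A^*_{12}$, so that agent $2$ strictly prefers to grab all of them and thereby exceeds $k_2$. This makes the welfare accounting trivial: $\OPTUSW$ is unchanged and the cardinal optimum can only lose, with no cross-category bookkeeping at all. You instead skip the Lemma~\ref{lem:red2} preprocessing and perform a merging move in the style of Lemma~\ref{lem:red1}, relocating $u_1(A^*_{1j_2})$ into $C_{j_1}$ and zeroing both agents on $A^*_{1j_2}$; the price is exactly the accounting step you flag as the main obstacle. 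That step does close, and more simply than by rerunning the greedy procedure of Lemma~\ref{lem:conscard}: given the cardinal optimum $\cB$ of $I'$, define a cardinal allocation $\cD$ of $I$ agreeing with $\cB$ everywhere except on $C_{j_2}$, where $D_{1j_2}=A^*_{1j_2}$ and $D_{2j_2}=A^*_{2j_2}$ (feasible since neither agent exceeds $k_{j_2}$ under $\cA^*$). On $C_{j_2}$ one gains at least $u_1(A^*_{1j_2})$, using $u_2(g)\ge u_1(g)$ for $g\in A^*_{2j_2}$ by optimality of $\cA^*$; on $A^*_{1j_1}$ one loses at most $\frac{k_{j_1}}{|A^*_{1j_1}|}u_1(A^*_{1j_2})<u_1(A^*_{1j_2})$; everything else is unchanged. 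Hence $\maxUSW(\mathcal{A})\ge \USW(\cD)\ge \maxUSWa(\mathcal{A})$. Two further remarks: your argument makes agent $2$ exceed $k_{j_2}$ only via a favourable tie-break on items worth zero to both agents, whereas the paper's $\epsilon$-trick forces the violation strictly (if the lemma is read as requiring that \emph{every} utilitarian-optimal allocation of $I'$, or a canonically tie-broken one, exhibit the violation, you would want to add the same $\epsilon$-perturbation on top of your construction); and your implicit assumption that some category $j_2\ne j_1$ has $m_{j_2}>k_{j_2}$ is also implicit in the paper's proof, so it is not a defect of your argument specifically.
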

\begin{proof}
We ignore the case where neither agent violates the cardinality constraints, as it holds that
$\frac{\OPTUSW(I)}{\maxUSW(\mathcal{A})}=1$.
If only one agent exceeds the cardinality constraints, by Lemma~\ref{lem:red1}, it suffices to assume (w.l.o.g.) that agent $1$ exceeds (only) the constraint of category $1$. Then by Lemma~\ref{lem:red2}, we further assume that agent $1$ receives zero utility from $\{C_2, C_3,\ldots\}$.
As agent $2$ does not exceed the cardinality constraint of category $2$, agent $1$ receives at least $|A^*_{12}| \geq |C_2|-k_2$ items from $C_2$ and by our assumption, has a value of $0$ for each of them.
Since agents' utilities are normalized, agent $2$ must receive at least one non-zero utility item $g^*\in \cA^*$. We can construct another instance $I'$ where agent $2$'s utility for $g^*\in \cA^*$ is decreased by an arbitrarily small number $\epsilon>0$, and her utility for each item in $A^*_{12}$ is increased by $\frac{\epsilon}{|A^*_{12}|}$. As a result, agent $2$ now exceeds the cardinality constraint in category $2$.
The optimal utilitarian welfare of $I'$ is equal to $\USW(\cA^*)$, while that of utilitarian-optimal cardinal allocation slightly decreased as agent $2$ can only receive $k_2$ items in cardinal allocations. Therefore, $\frac{\OPTUSW(I)}{\maxUSW(\mathcal{A})}\leq \frac{\OPTUSW(I')}{\maxUSWa(\mathcal{A})}$.
\end{proof}

\end{document}